\newcommand{\fractal}[2]{\ensuremath{#1_{#2}}}
    \newcommand{\href}[2]{#2}
\newif\ifarxiv
\newif\ifabstract
\newif\iffull
\newtoks\magicAppendix
\newtoks\magictoks
\newif\iflater
\long\def\later#1{\magictoks={#1}%
  \edef\magictodo{\noexpand\magicAppendix={\the\magicAppendix \par
    \the\magictoks%
  }}
  \magictodo}
\long\def\both#1{\magictoks={#1}%
  \edef\magictodo{\noexpand\magicAppendix={\the\magicAppendix \par
    \noexpand\setcounter{theorem-preserve}{\noexpand\arabic{theorem}}%
    \noexpand\setcounter{theorem}{\arabic{theorem}}%
    \noexpand\setcounter{section-preserve}{\noexpand\arabic{section}}%
    \noexpand\setcounter{section}{\arabic{section}}%
	\noexpand\let\noexpand\oldsection=\noexpand\thesection
	\noexpand\def\noexpand\thesection{\thesection}
	\noexpand\let\noexpand\oldlabel=\noexpand\label
	\noexpand\let\noexpand\label=\noexpand\blank
    \the\magictoks%
    \noexpand\setcounter{theorem}{\noexpand\arabic{theorem-preserve}}%
    \noexpand\setcounter{section}{\noexpand\arabic{section-preserve}}%
	\noexpand\let\noexpand\thesection=\noexpand\oldsection
	\noexpand\let\noexpand\label=\noexpand\oldlabel
  }}
  \magictodo
  \the\magictoks}
\long\def\later#1{#1}
\long\def\both#1{#1}
\long\def\magicappendix{
	\latertrue%
	\the\magicAppendix%
}
\newtheorem{observation}{Observation}
\newcommand{\Z}{\mathbb{Z}}
\newcommand{\N}{\mathbb{N}}
\newcommand{\dom}{{\rm dom} \;}
\newcommand{\res}[1]{\textrm{res}(#1)}
\newcommand{\termasm}[1]{\mathcal{A}_{\Box}[\mathcal{#1}]}
\newcommand{\prodasm}[1]{\mathcal{A}[\mathcal{#1}]}
\newcommand{\calT}{\mathcal{T}}
\newcommand{\Hfrac}{\bf{H}}
\newcommand{\Ufrac}{\bf{U}}
\begin{document}
\ifabstract
 \addtolength{\belowcaptionskip}{0pt}
 \addtolength{\abovecaptionskip}{-8pt}
\fi

\title{Hierarchical Growth is Necessary and (Sometimes) Sufficient to Self-Assemble Discrete Self-Similar Fractals}

\author{
 Jacob Hendricks%
    \thanks{Department of Computer Science and Information Systems, University of Wisconsin - River Falls, River Falls, WI, USA
    \protect\url{jacob.hendricks@uwrf.edu}}
\and
 Joseph Opseth%
    \thanks{Department of Mathematics, University of Wisconsin - River Falls,
    \protect\url{joseph.opseth@my.uwrf.edu}}
\and
 Matthew J. Patitz
    \thanks{Department of Computer Science and Computer Engineering, University of Arkansas, Fayetteville, AR, USA
    \protect\url{patitz@uark.edu} This author's research was supported in part by National Science Foundation Grants CCF-1422152 and CAREER-1553166.}
\and
    Scott M. Summers\thanks{Computer Science Department, University of Wisconsin--Oshkosh, Oshkosh, WI 54901, USA,\protect\url{summerss@uwosh.edu}.}
}

\institute{}

\date{}

\maketitle

\begin{abstract}
In this paper, we prove that in the abstract Tile Assembly Model (aTAM), an accretion-based model which only allows for a single tile to attach to a growing assembly at each step, there are no tile assembly systems capable of self-assembling the discrete self-similar fractals known as the ``H'' and ``U'' fractals.  We then show that in a related model which allows for hierarchical self-assembly, the 2-Handed Assembly Model (2HAM), there does exist a tile assembly systems which self-assembles the ``U'' fractal and conjecture that the same holds for the ``H'' fractal.  This is the first example of discrete self similar fractals which self-assemble in the 2HAM but not in the aTAM, providing a direct comparison of the models and greater understanding of the power of hierarchical assembly.
\end{abstract}

\vspace{-20pt}
\vspace{-10pt}
\section{Introduction}
\vspace{-10pt}

Systems composed of large, disorganized collections of simple components which autonomously self-assemble into complex structures have been observed in nature, and have also been artificially designed as well as theoretically modeled.  These studies have shown the remarkable power of self-assembling systems to be algorithmically directed across a wide diversity of models with varying dynamics which determine the ways in which the constituent components can combine.  At two ends of an important dimension in this spectrum of dynamics are models in which the atomic components can only combine to growing structures one at a time, e.g. the tile-based abstract Tile Assembly Model (aTAM) \cite{Winf98}, and those in which arbitrarily large assemblies of previously combined components can combine with each other, e.g. the 2-Handed Assembly Model (2HAM) \cite{Versus, CheDot12, Luhrs08, AGKS05g}.  Even though models such as the aTAM which are strictly bound to one-tile-at-a-time growth have been shown to be computationally universal and very powerful in terms of the structures which can self-assemble within them, it has been shown that the hierarchical growth allowed by models such as the 2HAM can afford even greater powers \cite{Versus}.

In pursuit of understanding the boundaries of what is possible in these models, the self-assembly of aperiodic structures has been studied. For example, in~\cite{2HAM-temp1}, a 2HAM system with temperature parameter equal to $1$ is given which self-assembles aperiodic patterns. Aperiodic structures are theoretically fundamental to the concept of Turing universal computation as well as embodied in many mathematical and natural systems as fractals. In fact, the complex aperiodic structure of fractals, as well as their pervasiveness in nature, have inspired much previous work on the self-assembly of fractal structures \cite{FujHarParWinMur07, RoPaWi04}, especially discrete self-similar fractals (DSSF's) \cite{jSADSSF, KL09, TreeFractals, jKautzShutters, RoPaWi04, LutzShutters12, STAM-fractals}. In a tribute to their complex structure, previous work has shown the impossibility of self-assembly of several DSSF's in the aTAM and 2HAM \cite{jSSADST, jSADSSF, KL09, TreeFractals, jKautzShutters, RoPaWi04, LutzShutters12} yet there have also been results showing some models and systems in which their self-assembly is possible \cite{MHAM,STAM-fractals,JonoskaSignals1,JonoskaSignals2}.
Quite notably, a recent result \cite{JacobUCNC2017} is the first to achieve non-scaled self-assembly of a DSSF in the 2HAM.  That work showed that DSSF's with generators (i.e. initial stages which define the shapes of the infinite series of stages) that have square, or 4-sided, boundaries can self-assemble in the 2HAM.  However, they also gave an example of a DSSF with a 3-sided generator that does not.  While previous work has shown sparsely-connected fractals which don't self-assemble in the aTAM or 2HAM  \cite{jSSADST, Versus}, the recent results hinted that perhaps only extremely well-connected fractals, such as those that have 4-sided generators, may be able to self-assemble in the 2HAM, while perhaps none may be able to in the aTAM. In this paper, we continue this line of research into the self-assembly of DSSFs in the aTAM and 2HAM.

In this paper, we specifically consider aTAM and 2HAM systems which finitely self-assemble DSSFs. Finite self-assembly was defined to better understand how 2HAM systems self-assemble infinite shapes (e.g. DSSF's). Intuitively, a shape $S$, finitely self-assembles in a tile assembly system if any finite producible assembly of the system can continue to self-assemble into the shape $S$. Finite self-assembly is a less constrained version of strict self-assembly. Intuitively, a shape $S$ strictly self-assembles in a tile assembly system if it places tiles on -- and only on -- points in $S$. Note that strict self-assembly implies finite self-assembly but the converse is not true in general. For example, a tile system could produce an infinite non-terminal producible assembly that has the property that it cannot self-assemble into the target shape $S$, but any finite producible assembly of the system could self-assemble into $S$.
To further advance the possibility that no DSSF's may self-assemble in the aTAM, we provide impossibility results about fractals with more inter-stage connectivity than any previous fractal whose strict self-assembly in the aTAM was shown to be impossible. In particular, our impossibility results give two fractals which cannot be finitely self-assembled by any aTAM system, which implies that those fractals cannot be strictly self-assembled by any aTAM system either. However, our results also show that the landscape in the 2HAM is more convoluted.  Namely, although \cite{JacobUCNC2017} exhibited a fractal with a 3-sided generator that does not finitely self-assemble in the 2HAM, here we show one which does.  This proves that the boundary between what can and cannot self-assemble in the 2HAM is less understood.  Notably, our impossibility results and constructions are the first to give a head-to-head contrast of the powers of the aTAM and 2HAM to self-assemble DSSF's.  In~\cite{Versus}, shapes are defined which finitely self-assemble in the 2HAM but not in the aTAM, as well as shapes which strictly self-assemble in the aTAM but not in the 2HAM. In this paper, we prove that the hierarchical process of growth attainable in the 2HAM is necessary and sufficient for the self-assembly of certain DSSF's.
Moreover, the construction techniques to build them in the 2HAM do not follow traditional growth patterns of ``stage-by-stage'' growth, but rely fundamentally on combinations of components across a spectrum of hierarchical levels.

\def\latent/{{\texttt{latent}}}
\def\on/{{\texttt{on}}}
\def\off/{{\texttt{off}}}
\def\calF{{\mathcal{F}}}
\vspace{-14pt}
\section{Preliminaries}\label{sec:prelims}
\vspace{-10pt}
Throughout this paper, we use standard definitions of, and terminology related to, the aTAM, 2HAM and discrete self-similar fractals.  For more details of each, please
see Sections~\ref{sec:atam-informal} and \ref{sec:2ham-informal}.
In this section, we include only the few definitions unique to this paper.
\vspace{-13pt}
\subsection{Definitions for the aTAM and 2HAM}
\vspace{-8pt}
Let $\vec{\alpha}$ be an assembly sequence of an aTAM system. In the following, $\vec{\alpha}[i]$ denotes the tile that $\vec{\alpha}$ places at assembly step $i$. We say that $\vec{\alpha}[i]$ is the \emph{parent} of $\vec{\alpha}[j]$ if $i < j$ and $\vec{\alpha}[j]$ binds to $\vec{\alpha}[i]$. Furthermore, we say that tile $\vec{\alpha}[i]$ is the \emph{ancestor} of a tile $\vec{\alpha}[k]$ if either $\vec{\alpha}[i]$ is the parent of $\vec{\alpha}[k]$, or there exists an index $j$, such that, $i < j < k$, $\vec{\alpha}[j]$ is the parent of $\vec{\alpha}[k]$ and $\vec{\alpha}[i]$ is the ancestor of $\vec{\alpha}[j]$. Note that $\vec{\alpha}[j]$ implicitly refers to both the tile type and location, and the parent and ancestor relationships, in general, depend on the given assembly sequence $\vec{\alpha}$.

For an infinite shape $X\subseteq \Z^2$ and an aTAM or 2HAM system $\calT$, we say that $\calT$ \emph{finitely self-assembles} $X$ if every finite producible assembly of $\calT$ has a possible way of growing into an assembly that places tiles exactly on those points in $X$. In this paper we consider finite self-assembly of DSSF's (in the strict sense).

\vspace{-12pt}
\subsection{The U-fractal and H-fractal}
\vspace{-2pt}
For the definition of discrete self-similar fractal (DSSF)\footnote{Note that we use the standard DSSF definition in which DSSF's are contained within quadrant $I$ of $\mathbb{N}^2$.  However, our impossibility result proofs could be trivially modified to hold for alternate definitions which allow for DSSFs to occupy any set of quadrants.}
see Section~\ref{sec:dssf-def}.
\vspace{-8pt}

\begin{definition}
The \emph{$\Ufrac$ fractal} is the DSSF whose generator consists of exactly the points $\{(0,0),(0,1),(0,2),(1,0),(2,0),(2,1),(2,2)\}$.
\end{definition}

\begin{definition}
The \emph{$\Hfrac$ fractal} is the DSSF whose generator consists of exactly the points $\{(0,0),(0,1),(0,2),(1,1),(2,0),(2,1),(2,2)\}$.
\end{definition}

\vspace{-16pt}
\section{Brief proof of the impossibility of finite self-assembly of the $\Hfrac$ fractal in the aTAM}\label{sec:aTAM-brief-impossible}
\vspace{-4pt}

The $\Hfrac$ fractal is defined as shown in Figure~\ref{fig:H}.  Let $h_i$ be the $i$-th stage of $\Hfrac$. We call the \emph{center} tile of $h_i$, denoted as $center(h_i)$, the tile in the center of the stage that connects the \emph{left} and \emph{right} halves of $h_i$.

Let $B^{\Hfrac}_0 = \{(0,0),(0,1),(0,2),(2,0),(2,1),(2,2)\}$.  For stages $i > 1$, we call the following set of 6 points the \emph{bottleneck} points of $h_i$, or $B^{\Hfrac}_i$: \\ $B^{\Hfrac}_i = \left\{\left.\left(3^{i-1}+\frac{3^{i-2}-1}{2},3^{i-1}+\frac{3^{i-2}-1}{2}\right) + 3^{i-2}b \; \right| \; b \in B^{\Hfrac}_0 \right\}$. An example of the bottleneck points for a few stages of $\Hfrac$ can be seen in Figure~\ref{fig:H}. In what follows, we will use the term ``bottleneck tile'' to refer to the tile placed (by some assembly sequence) at that bottleneck point.

\begin{wrapfigure}{r}{0.5\textwidth}
\begin{center}
\vspace{-35pt}
\includegraphics[width=2.0in]{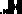}
\caption{First three stages of the $\Hfrac$ fractal, with the left-most being the generator.  The bottleneck points of stages 2 and 3 (blue).\vspace{-35pt}}
\label{fig:H}

\end{center}
\end{wrapfigure}

The top, middle and bottom bottleneck points of $h_i$ are denoted as $top(i)$, $middle(i)$ and $bottom(i)$. We will refer to the points in $h_i$ in between its center tile and left bottleneck points as its \emph{left-center}. Assuming $\Hfrac$ finitely self-assembles in some TAS $\mathcal{T}$, then every tile placed in the left-center of $h_i$, for all $i \ge j$ for some $j \in \mathbb{N}$, has as an ancestor, relative to some $\mathcal{T}$ assembly sequence $\vec{\alpha}$, at least one bottleneck point. We call a tile in the left-center of $h_i$ \emph{top-left-placed} if $top(i)$ is its ancestor and $middle(i)$ and $bottom(i)$ are not its ancestors. We define \emph{middle-left-placed} and \emph{bottom-left-placed} tiles (in the left-center of $h_i$) similarly. Note that, if the parent of the center tile of $h_i$ is adjacent to the left, then every tile in the left-center of $h_i$ must have some bottleneck point (either top, middle or bottom) in the left half of $h_i$ as an ancestor.

\begin{theorem}\label{thm:H-impossible}
$\Hfrac$ does not finitely self-assemble in the aTAM.
\end{theorem}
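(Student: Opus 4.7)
The plan is to derive a contradiction from the assumption that some aTAM system $\mathcal{T}$ with tile set $T$, $|T|=N$, finitely self-assembles $\Hfrac$, using a nested pigeonhole over bottleneck tile types followed by a surgery argument.

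First, I would pin down the growth direction at the center. Take $i$ large enough that the seed of $\mathcal{T}$ lies outside $h_i$'s middle block, and fix an assembly sequence $\vec{\alpha}$ of $\mathcal{T}$ whose producible assembly contains $center(h_i)$. Because $center(h_i)$ has exactly two tiled neighbors in $\Hfrac$ (its two middle-row neighbors), its parent under $\vec{\alpha}$ is one of these two. By horizontal symmetry of $\Hfrac$ and pigeonhole over an infinite sequence of stages, I may assume WLOG that the parent is the left neighbor, which forces every tile of the left-center to be placed before $center(h_i)$ and guarantees, as in the setup, that every left-center tile has at least one of $top(i), middle(i), bottom(i)$ as an ancestor.

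Second, I would propagate the setup down the recursive scales. The point $center(h_i)$ is the common center of the nested stages $h_{i-1}, h_{i-2}, \ldots, h_2$, each carrying its own three left bottleneck tiles $top(j), middle(j), bottom(j)$ lying inside the left-center of $h_i$. Encode each nested stage $j$ by the pair consisting of (a) the triple of tile types at $(top(j), middle(j), bottom(j))$, drawn from $T^3$, and (b) the triple recording, for each nested bottleneck, which subset of $\{top(i), middle(i), bottom(i)\}$ are its ancestors. There are at most $N^3 \cdot 8^3$ such encodings, so once $i$ exceeds this bound, pigeonhole produces nested stages $j_1 < j_2$ at which both components agree.

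Third, I would perform a surgery. Let $\alpha$ be the finite producible assembly consisting of everything placed in $\vec{\alpha}$ up to and including the three left bottleneck tiles of stage $j_2$, and let $\beta$ be the sub-assembly that $\vec{\alpha}$ grows inward from the three $j_1$ bottlenecks toward $center(h_i)$. Because the triples agree at $j_1$ and $j_2$, $\beta$ translates so its three root bottlenecks coincide with those of $\alpha$, and the graft yields a producible finite assembly of $\mathcal{T}$. However, the grafted interior now occupies a region scaled down by a factor $3^{j_1-j_2}<1$ relative to what $\Hfrac$ requires in the stage-$j_2$ left-center: at some concrete lattice point the grafted assembly either tiles a point of $\Z^2 \setminus \Hfrac$ or else forces a point of $\Hfrac$ to become permanently untileable. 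Either way this finite producible assembly cannot be extended to an assembly of domain exactly $\Hfrac$, contradicting finite self-assembly. The case where $center(h_i)$'s parent is the right neighbor is handled symmetrically by horizontal reflection.

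The main obstacle is the surgery step: certifying both that the graft is a bona fide producible assembly of $\mathcal{T}$ (satisfying cooperative binding in a valid attachment order) and that it demonstrably cannot be completed into $\Hfrac$. The former is enforced by the matching tile types and ancestry triples at the three bottleneck positions; the latter must be forced from the factor-$3^{j_2-j_1}$ scale mismatch between the grafted interior and the $\Hfrac$ sub-shape it is being pasted to impersonate. The technical work lies in exhibiting a specific lattice point that witnesses this geometric mismatch and arguing that no future attachment can repair it.
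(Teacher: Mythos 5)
Your overall strategy---pigeonhole on the tile types at bottleneck positions, then replay growth from one copy at another copy of a different scale---is the same skeleton the paper uses, but the surgery step as you describe it cannot be carried out. You propose to take the sub-assembly $\beta$ grown inward from \emph{all three} left bottlenecks of stage $j_1$ and translate it ``so its three root bottlenecks coincide with those'' of stage $j_2$. No single translation can do this: the three left bottleneck points of stage $j$ are spaced $3^{j-2}$ apart vertically, so the roots at scale $j_1$ and at scale $j_2$ are non-congruent triples whenever $j_1 \neq j_2$. A graft rooted at three points is only valid if all three roots align simultaneously, and they never do across scales. The paper's proof is built precisely around avoiding this: it first argues that every tile of the left-center has at least one bottleneck as ancestor, and that cooperation between growths descending from \emph{distinct} bottlenecks can only occur after those unilateral growths have approached within one lattice unit of each other. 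This forces a trichotomy---growth descending from the top bottleneck alone reaches $1/3$ of the way down, or symmetrically from the bottom, or growth from the middle bottleneck alone reaches $2/3$ of the way both up and down---and each case replays a \emph{single-rooted} sub-assembly, which can always be rigidly translated to the matching single bottleneck of another stage. Your proposal contains no analogue of this case analysis, so you have no guarantee that any replayable unilateral growth extends far enough to create a conflict.

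The second gap is in the final ``witness'' step, which you yourself flag as the technical work but which is where the theorem actually lives. Two issues: first, exiting the small nested copy is not the same as exiting $\Hfrac$ (the nested copies sit inside larger stages, all of which are in $\Hfrac$), so one must verify, via the address structure ($d^{j-2}ad$, $d^{j-2}bd$, etc.), that the replayed path is forced through a specific lattice point genuinely outside $\Hfrac$; the paper does this per case with explicit coordinate computations. Second, your graft direction is wrong for producing such a witness: pasting the stage-$j_1$ interior (which spans only $\sim 3^{j_1-2}$) into the stage-$j_2$ context yields an \emph{undershoot}---the grafted assembly simply stops short of the center, and by self-similarity it may sit entirely inside $\Hfrac$, leaving open the possibility that further attachments complete the shape. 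You need the overshoot direction (replay the larger-scale growth at the smaller-scale bottleneck, as in the paper's Cases 1 and 2) or the depth-of-nesting mismatch (as in its Case 3) to force a tile outside $\Hfrac$. As written, the proposal does not reach a contradiction.
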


\begin{proof}
For the sake of obtaining a contradiction, assume there exists an aTAM TAS $\calT = (T,\sigma,\tau)$ in which $\Hfrac$ finitely self-assembles.  We will show that $\Hfrac$ does not finitely self-assemble in $\mathcal{T}$.  Without loss of generality, we will assume that $|\sigma| = 1$, i.e. that $\calT$ is singly-seeded but our proof technique will hold for any TAS $\mathcal{T}$ with finite seed assembly. Since the location of $\sigma$ must be within $\Hfrac$, let $s$ be the stage number of the smallest stage of $\Hfrac$ which contains $\sigma$.

Let $c = 6|T|^6$.  If $\Hfrac$ finitely self-assembles in $\calT$, then every producible assembly in $\calT$ has domain contained in $\Hfrac$. Let $\vec{\alpha}$ be the shortest assembly sequence in $\calT$ whose result has domain $h_{c+s+2}$, subject to the additional constraint that, when multiple locations could receive a tile in a given step, $\vec{\alpha}$ always places a tile in a location of the smallest possible stage.

By our choice of $c$, we know that there are at least 6 stages of $\Hfrac$ whose respective bottleneck points are identically tiled by $\vec{\alpha}$. Since, in any assembly sequence, the center tile of each stage of $\Hfrac$ either has a parent adjacent to the left or right, it follows, without loss of generality, that there are at least 3 stages, namely $h_i$, $h_j$ and $h_k$, for $i < j < k$, whose respective bottleneck points are identically tiled by $\vec{\alpha}$ and whose respective center tiles have parents adjacent to the left.

Relative to $\vec{\alpha}$, there are three cases to consider:
(1) and (2) some top-left-placed (bottom-left-placed) tile of the left-center of $h_j$ is placed at a point that is not contained in an $h_{j-3}$, appropriately-translated, so that $center(j-3)$, appropriately-translated, is $top(j)$ ($bottom(j)$), or
(3) some middle-left-placed tile of the left-center of $h_j$ is placed at a point that is not contained in an $h_{j-2}$, appropriately-translated, so that $center(j-2)$, appropriately-translated, is $top(j)$. (Intuitively, these are conditions specifying how far growth from each bottleneck tile extends toward its neighbors before utilizing cooperation with growth from them.)
Note that, if none of these cases apply, then the left-center of $h_j$ wouldn't assemble completely and $\Hfrac$ wouldn't finitely self-assemble in $\mathcal{T}$.

{\bf Case 1:} Use $\vec{\alpha}$ to create a new valid assembly sequence in $\mathcal{T}$ as follows.  Starting from the seed, run $\vec{\alpha}$ until the step at which it places the first bottleneck tile on the left side of $h_j$. Then, begin recording a sub-sequence of $\vec{\alpha}$ and denote this sub-sequence as $\vec{\alpha}'$.  As we run $\vec{\alpha}$ forward from this point, until it places the last tile of $h_j$, whenever a top-left-placed tile in $h_j$ is placed by $\vec{\alpha}$, we add that tile placement (type and location) to $\vec{\alpha}'$.  In this way, $\vec{\alpha}'$ becomes a sub-sequence of $\vec{\alpha}$ that records the growth of the top-placed sub-assembly -- and only the top-placed sub-assembly -- of the left-center of $h_j$.

Now, reset $\vec{\alpha}$ to the seed and begin its forward growth until the placement of the first bottleneck tile on the left side of $h_i$ (recall $i < j$).  At this point, merge $\vec{\alpha}$ and $\vec{\alpha}'$ as follows.  For each tile position $\vec{p}$ in $\vec{\alpha}'$, we translate it so that the new position, $\vec{p}'$, is the point with the same relative offset from the top-left bottleneck position of $h_i$ as $\vec{p}$ was from the top-left bottleneck position of $h_j$. Continue to run $\vec{\alpha}$ forward by performing all tile placements up to, and including, the placement of $top(i)$, with the exception of the $middle(i)$, $bottom(i)$, or any descendants thereof. As soon as $\vec{\alpha}$ places $top(i)$, we follow the tile placements of the modified $\vec{\alpha}'$. The result is a valid assembly sequence up to the point of the placement of at least one tile outside of $\Hfrac$ (since the portion of the left-center of $h_j$ grown by $\vec{\alpha}'$ doesn't fit within the locations of $\Hfrac$ available in $h_i$). Thus, $\Hfrac$ does not finitely self-assemble in $\calT$. A similar scenario, but for a different fractal, in which such out-of-bounds growth may occur, is depicted in Figure~\ref{fig:main-U5-bad-growth-in-3-and-4}. \\
{\bf Case 2:}  This case is symmetric to the previous case. \\
{\bf Case 3:}  First, create an assembly sub-sequence $\vec{\alpha}''$ that records the tile placements of only the middle-placed tiles of $h_j$, similar to the construction of $\vec{\alpha}'$ in Case 1. Then, run $\vec{\alpha}$ forward, starting from the seed, performing all tile placements up to, and including, the placement of the $middle(k)$, with the exception of $top(k)$ or $bottom(k)$, or descendants thereof. As soon as $\vec{\alpha}$ places $middle(k)$, we follow the tile placements of the modified $\vec{\alpha}''$, appropriately-translated, from $h_j$ to $h_k$. Here, we are essentially replaying the assembly of a smaller stage within a larger stage. The result is a valid assembly sequence up to the point of the placement of at least one tile outside of $\Hfrac$ (due to the specifically different scales of portions of $\Hfrac$ in $h_j$ and $h_k$). Thus, $\Hfrac$ does not finitely self-assemble in $\calT$.
\qed

\end{proof}

\begin{corollary}\label{cor:H-strict-impossible}
$\Hfrac$ does not strictly self-assemble in the aTAM.
\end{corollary}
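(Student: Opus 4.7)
The plan is to derive the corollary immediately from Theorem~\ref{thm:H-impossible} via the logical relationship between strict and finite self-assembly that is spelled out in the preliminaries. Recall that, as noted in Section~\ref{sec:prelims}, strict self-assembly of a shape $X$ in a tile assembly system $\mathcal{T}$ requires that $\mathcal{T}$ places tiles on and only on the points of $X$, while finite self-assembly only requires that every finite producible assembly can be extended to an assembly whose domain is exactly $X$. The paper explicitly observes that strict self-assembly implies finite self-assembly (though the converse fails), so this implication is a free input to the argument.

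With that in hand, the proof is a one-line contrapositive. Suppose, for contradiction, that there is some aTAM TAS $\mathcal{T} = (T, \sigma, \tau)$ in which $\Hfrac$ strictly self-assembles. By the implication just recalled, $\Hfrac$ also finitely self-assembles in $\mathcal{T}$. But this directly contradicts Theorem~\ref{thm:H-impossible}, which asserts that no aTAM system can finitely self-assemble $\Hfrac$. Therefore no such $\mathcal{T}$ exists, and $\Hfrac$ does not strictly self-assemble in the aTAM.

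There is no real obstacle to address here: the entire substantive work (the case analysis on top-left-placed, middle-left-placed, and bottom-left-placed tiles, together with the sub-sequence surgery argument producing out-of-bounds growth) has already been carried out in Theorem~\ref{thm:H-impossible}, which was proved in the weaker setting of finite self-assembly precisely so that the corresponding strict self-assembly impossibility would come for free. The only thing to be careful about is invoking the correct direction of the implication between the two notions, which is why the preliminaries emphasize that it runs from strict to finite and not the other way around.
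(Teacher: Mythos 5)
Your proposal is correct and matches the paper's own justification exactly: the paper also derives Corollary~\ref{cor:H-strict-impossible} from Theorem~\ref{thm:H-impossible} via the observation that strict self-assembly implies finite self-assembly. Nothing further is needed.
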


Since strict self-assembly of a shape $S$ by a system $\mathcal{T}$ implies finite self-assembly of $S$ by $\mathcal{T}$, Corollary~\ref{cor:H-strict-impossible} follows from Theorem~\ref{thm:H-impossible}. 

\vspace{-16pt}
\section{Impossibility of Finite Self-Assembly of the $\Ufrac$ fractal in the aTAM}\label{sec:u-impossible}
\vspace{-6pt}
The $\Ufrac$ fractal is defined as shown in Figure~\ref{fig:U}.

\begin{theorem}\label{thm:U-impossible}
$\Ufrac$ does not finitely self-assemble in the aTAM.
\end{theorem}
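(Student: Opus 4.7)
The plan is to adapt the strategy of Theorem~\ref{thm:H-impossible} to the $\Ufrac$ fractal. Assume for contradiction that $\Ufrac$ finitely self-assembles in some aTAM system $\calT=(T,\sigma,\tau)$; as in the $\Hfrac$ proof we may assume $|\sigma|=1$. Let $u_i$ denote the $i$-th stage of $\Ufrac$, and let $s$ denote the smallest stage index such that $\sigma \in u_s$. The overall approach is to locate a narrow bottleneck in each stage through which all growth into a distinguished sub-region must funnel, apply pigeonhole across many stages to find stages whose bottleneck tile types agree, and then transplant an appropriately recorded assembly sub-sequence from a larger stage into a smaller stage so that the replayed growth necessarily places at least one tile outside $\Ufrac$, contradicting finite self-assembly.

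The key geometric feature of $\Ufrac$ that I will exploit is that the top-left prong of the U (the sub-assembly at generator position $(0,2)$ of each stage) is attached to the rest of its containing stage only through a two-tile vertical bottleneck, because the generator point $(1,2)$ is absent. For each stage $i \geq 2$, define the top-left bottleneck tiles as the pair of tiles at the bottom corners of $\text{sub}(0,2)$ of $u_i$, which are adjacent to the tiles at the top corners of $\text{sub}(0,1)$. Given a chosen assembly sequence $\vec{\alpha}$ and a stage $i$ with $i$ much larger than $s$, every tile inside $\text{sub}(0,2)$ of $u_i$ has exactly one of these two bottleneck tiles in its ancestor chain, partitioning the prong's tiles into \emph{left-placed} and \emph{right-placed} subsets. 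Choosing a constant $c$ polynomial in $|T|$ and letting $\vec{\alpha}$ be the shortest valid assembly sequence producing $u_{c+s+2}$, with the same tiebreaking rule as in Theorem~\ref{thm:H-impossible}, pigeonhole yields at least three stages $u_i, u_j, u_k$ with $i<j<k$ whose top-left bottleneck tile types agree.

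The case analysis then mirrors that of Theorem~\ref{thm:H-impossible}. For the prong of $u_j$, if some left-placed tile lies outside the appropriately scaled sub-sub-assembly containing the left bottleneck, I would record only the left-placed tile placements as a sub-sequence $\vec{\alpha}'$, translate it to the corresponding prong of $u_i$, splice it into $\vec{\alpha}$ immediately after the matching left bottleneck tile of $u_i$ is placed, and observe that the scale mismatch forces at least one tile of the replayed growth to land outside $\Ufrac$; the missing generator point $(1,2)$ guarantees an accessible forbidden location just above the prong of $u_i$ for the replayed growth to spill into. The symmetric case covers right-placed tiles. A third cooperation-based case, analogous to Case~3 of the $\Hfrac$ proof, replays growth between stages $u_j$ and $u_k$ at mismatched scales, producing out-of-bounds tile placements as illustrated in Figure~\ref{fig:main-U5-bad-growth-in-3-and-4}.

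The hard part will be formulating the cooperation case precisely. Unlike $\Hfrac$, whose left-center region borders three distinct bottlenecks (top, middle, bottom) giving a natural trichotomy, $\Ufrac$'s top prong borders only two bottlenecks, so the missing third case must be recovered either by descending one recursion level and analyzing bottlenecks inside $\text{sub}(0,2).\text{sub}(1,0)$, or by running a parallel argument on the bottom-middle sub-assembly $\text{sub}(1,0)$ of each stage, whose two three-tile horizontal bottlenecks afford a richer internal case structure. Verifying that every spliced sub-sequence remains a valid $\calT$-producible sub-sequence, and that the scale mismatch forces an out-of-bounds tile in every case, will require careful tracking of tile types and relative positions across stages along the lines of the $\Hfrac$ proof.
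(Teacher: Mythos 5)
Your central geometric premise is false, and the main line of the argument collapses with it. You claim that the top-left prong of $u_n$ (the substage at generator position $(0,2)$) is attached to the rest of the stage through a two-tile bottleneck. It is true that, since $(1,2)$ is absent from the generator, this prong connects to the rest of $u_n$ only through the substage at $(0,1)$ directly beneath it. But that interface is not two tiles wide: the bottom row of every stage of $\Ufrac$ is completely filled (the generator's bottom row $(0,0),(1,0),(2,0)$ is full, and this propagates inductively), whereas the top row of a copy of $u_{n-1}$ is filled at $2^{n-1}$ points (the top rows of its own substages at positions $(0,2)$ and $(2,2)$, recursively). Hence the prong meets the substage below it at $2^{n-1}$ contact points --- already $4$ at $n=3$, and unboundedly many thereafter. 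There is therefore no constant-size set of ``top-left bottleneck tiles'' over which to pigeonhole, tiles inside the prong need not have either of your two designated tiles as an ancestor, and the left-placed/right-placed partition, the choice of $c$, and the splice-and-replay step for this region all fail.

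The genuine constant-width bottleneck of $\Ufrac$ --- and the one the paper's proof actually uses --- sits at the bottom center of each stage, where the left and right halves of the U join through the innermost central generator-scale copy: six points (a column of three on each side of the center tile of the bottom row) through which all growth between the two halves must pass, refined by which side the center tile's parent lies on. You gesture at exactly this region only in your final paragraph, as a fallback, and you explicitly leave the cooperation case unresolved. That case is in fact the crux: after showing that growth descending solely from each individual bottleneck tile cannot extend far enough (east, up, or down) without being replayable out of bounds in another identically-bottlenecked stage, one must still exclude the possibility that descendants of \emph{distinct} bottleneck tiles approach within one cell of each other and cooperate to fill the remaining region; the paper closes this by showing the extents established in the earlier cases leave a two-cell gap in all but one configuration, and that the one remaining configuration itself replays out of bounds in a smaller stage. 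As written, your proposal neither identifies the correct bottleneck nor closes the cooperation case, so it does not establish the theorem.
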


Due to space constraints, we only give brief description of the proof of Theorem~\ref{thm:U-impossible}.  Essentially, the proof is very similar to that of Theorem~\ref{thm:H-impossible}.  $\Ufrac$ has bottlenecks (which can be seen in Figure~\ref{fig:U}) similar to $\Hfrac$, and in a similar way, it is impossible for the portion of stages inside of the bottlenecks to self-assemble since the tiles at bottleneck locations of multiple stages must be identical, and growth which would have to be possible within one stage would be able to grow out of bounds of $\Ufrac$ in a different stage.  An example can be seen in Figure~\ref{fig:main-U5-bad-growth-in-3-and-4}, and more details of the proof can be found in
Section~\ref{sec:u-appendix}.

\begin{figure}[htp!]
\vspace{-16pt}
\centering
   \begin{subfigure}[t]{1.5in}
\centering
   \includegraphics[width=1.4in]{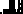}
   \caption{\label{fig:U}}
    \end{subfigure}
    \quad\quad
      \begin{subfigure}[t]{3.0in}
		\centering
   \includegraphics[width=2.5in]{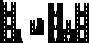}
   \caption{\label{fig:main-U5-bad-growth-in-3-and-4}}
    \end{subfigure}
   \caption{(a) First three stages of the $\Ufrac$ fractal, with the leftmost being the generator.  The bottleneck points of stages 2 and 3 are colored blue. (b) Depiction of how top-placed growth from stage 5 would go out of bounds of $\Ufrac$ in stage 3 and stage 4.  (left) A portion of stage 5 showing the 3 bottleneck tiles in black, and possible horizontal and vertical growth from the top bottleneck tile.  (middle and right)  Stages 3 and 4.  The black tile is the top left bottleneck tile, the green locations are those which correctly match the smaller stage, and the red are those which go out of bounds of $\Ufrac$. Clearly, all tiles in green positions will be able to grow, and then erroneous growth is forced to occur immediately east of the green tiles, where no other tiles could prevent this growth. (Note that only a single tile needs to be placed in a red location to break the shape of $\Ufrac$.)\vspace{-24pt}}
   \label{fig:U-stuff}
\end{figure}

\begin{corollary}\label{cor:U-strict-impossible}
$\Ufrac$ does not strictly self-assemble in the aTAM.
\end{corollary}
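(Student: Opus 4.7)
The plan is to deduce this corollary directly from Theorem~\ref{thm:U-impossible}, mirroring the way Corollary~\ref{cor:H-strict-impossible} was deduced from Theorem~\ref{thm:H-impossible}. The key is the general logical relationship between the two notions of self-assembly that is already emphasized in the paper's preliminaries: strict self-assembly of a shape $S$ by a tile assembly system $\calT$ implies finite self-assembly of $S$ by $\calT$. Intuitively, if $\calT$ strictly self-assembles $S$, then every producible assembly has its tiles placed only on points of $S$, and in particular every finite producible assembly is extendable into a (possibly infinite) terminal assembly whose domain is exactly $S$, which is precisely the finite self-assembly condition.

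Given this implication, I would argue by contrapositive. Suppose, for contradiction, that $\Ufrac$ strictly self-assembles in the aTAM via some system $\calT$. Then by the implication above, $\Ufrac$ also finitely self-assembles in $\calT$, which directly contradicts Theorem~\ref{thm:U-impossible}. Hence no such $\calT$ can exist, proving Corollary~\ref{cor:U-strict-impossible}. There is no substantive obstacle here beyond correctly invoking the definitional implication between strict and finite self-assembly; the real work has already been done in establishing the impossibility of finite self-assembly in Section~\ref{sec:u-impossible}. The proof is essentially a one-line appeal, parallel in form to the argument that yields Corollary~\ref{cor:H-strict-impossible} from Theorem~\ref{thm:H-impossible}.
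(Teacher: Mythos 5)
Your proposal is correct and matches the paper's approach exactly: the corollary follows from Theorem~\ref{thm:U-impossible} because strict self-assembly implies finite self-assembly, just as Corollary~\ref{cor:H-strict-impossible} follows from Theorem~\ref{thm:H-impossible}.
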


\vspace{-16pt}
\section{U-fractal Finitely Self-assembles in the 2HAM}\label{sec:2HAM-construction}
\vspace{-6pt}

In this section we show how to finitely self-assemble the $U$-fractal, $\bf{U}$, DSSF in the 2HAM (with scale factor of $1$) at temperature $2$. We will present our construction under the assumption that a particular assembly sequence is followed. We then show that the construction also holds for an arbitrary choice of assembly sequence. Here, we present the main idea of the construction and give more detail in 
Section~\ref{sec:construction-appendix}. 
First, we state our main positive result.

\begin{theorem}\label{thm:U-fractal}
Let $\bf{U}$ be the U-fractal DSSF. There exists a 2HAM TAS $\calT_{\bf{U}} = (T_{\bf{U}}, 2)$ that finitely self-assembles $\bf{U}$.
Moreover,  $\calT_{\bf{U}}$ has the property that for every stage $s\geq1$ and every terminal assembly $\alpha \in \termasm{T_{\bf{U}}}$, $\fractal{U}{s} \subset \dom(\alpha)$ (modulo translation).
\end{theorem}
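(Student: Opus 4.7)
The plan is to give a hierarchical construction in which, for each stage $s \geq 1$, $T_{\bf{U}}$ contains a dedicated family of tile types that cooperate to produce a \emph{stage-$s$ \base{}}: a two-handedly producible assembly whose domain is a translate of $\fractal{U}{s}$ and whose boundary exposes glues that encode the integer $s$. The seven-fold replication that defines the DSSF recursion is then implemented not by one-tile-at-a-time attachment across a bottleneck (which Theorem~\ref{thm:U-impossible} forbids), but by the two-handed combination of seven pre-built stage-$s$ \bases{} into a single stage-$(s+1)$ \base{}. Because each piece already has the correct $\fractal{U}{s}$ shape before combining, no tile ever has to be placed through the narrow channels that drive the aTAM impossibility proof.

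First I would specify the gadgets. At the base of the hierarchy, an \initializer{} together with a small hard-coded sub-construction produces the seven generator cells as a stage-$1$ \base{}. For general $s$, the perimeter of a stage-$s$ \base{} is decorated by \corners{}, \starters{}, and \presenting{} tiles that advertise stage-$s$-labelled glues at exactly the positions where seven stage-$s$ \bases{} should meet according to the U-generator. When two stage-$s$ \bases{} cooperate, \crawlers{} run along the shared edge and \grout{} tiles fill the interstitial positions dictated by the generator; once all seven pieces are joined, \stagebinding{} tiles cap the new boundary and relabel its glues from stage $s$ to stage $s+1$, so the result is a stage-$(s+1)$ \base{}. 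All glues are designed so that at $\tau = 2$ only matching stage labels cooperate, which guarantees that \bases{} of different stages cannot bind to one another.

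Under a cooperative schedule (build every stage-$s$ \base{} before any stage-$(s+1)$ combination is attempted, then combine the seven stage-$s$ \bases{} in a fixed order), a straightforward induction on $s$ shows that a stage-$s$ \base{} with domain $\fractal{U}{s}$ is produced, which immediately implies the ``moreover'' clause of the theorem for every terminal assembly $\alpha$. The main obstacle, and where the bulk of the argument lives, is relaxing this scheduling assumption to the fully adversarial assembly sequences that the 2HAM allows. The hard part will be establishing a structural invariant: every producible assembly is a \emph{partial} stage-$s$ \base{} for some $s$ --- that is, a sub-assembly of some stage-$s$ \base{} whose exposed glues form a consistent ``prefix'' of the corresponding boundary code --- and that two such partial \bases{} can combine only when they are partial \bases{} of a common stage sitting in positions compatible with the U-generator.

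Given that invariant, the remaining pieces fall out easily. Any finite producible assembly is a partial stage-$s$ \base{} for some $s$; it can first be completed to a full stage-$s$ \base{}, and then promoted arbitrarily far up the stage hierarchy by repeatedly combining seven pieces and applying \stagebinding{}, all while remaining inside $\bf{U}$ by construction. This simultaneously yields finite self-assembly of $\bf{U}$ and the stronger ``moreover'' property, since the promotion step forces $\fractal{U}{s} \subset \dom(\alpha)$ for every $s$ and every terminal $\alpha$. The delicate step I would expect to spend the most care on is the case analysis underlying the structural invariant, particularly ruling out ``cross-stage'' or ``mis-aligned'' combinations of partial \bases{} by exhaustive examination of which glue labels are exposed on the boundary of every producible intermediate.
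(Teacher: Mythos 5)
Your approach has a fatal flaw at its very first step: you propose that ``for each stage $s \geq 1$, $T_{\bf{U}}$ contains a dedicated family of tile types'' whose boundary glues ``encode the integer $s$,'' and that mis-aligned combinations are ruled out because ``only matching stage labels cooperate.'' A 2HAM TAS has a \emph{finite} tile set, hence only finitely many glue labels, so you cannot have per-stage tile families or glues that carry an unbounded stage index. This is not a repairable presentation issue --- it is the central difficulty of the theorem. The entire point of the paper's construction is that the recursion must be driven by a single, stage-independent collection of \grout{} and binding tiles, reused verbatim at every level; the paper arranges for every stage-$n$ ladder supertile to re-expose one of only $10$ fixed patterns of ``indicating glues,'' identical to those of the $10$ hard-coded stage-$2$ base supertiles, so that the same grout that assembled stage $3$ from stage $2$ also assembles stage $n{+}1$ from stage $n$. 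Once you are forced into this stage-agnostic regime, the ``delicate step'' you defer --- ruling out cross-stage and mis-aligned combinations --- can no longer be dispatched by comparing stage labels, and it becomes the real content of the proof (the paper spends its correctness argument on exactly this: enforced internal assembly order of the base supertiles, cooperative strength-$1$ binding of grout, the two-version trick for certain ladder types, and the race-condition analysis).

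Two smaller points. First, your decomposition into ``seven stage-$s$ pieces meeting according to the U-generator'' differs from the paper's (which uses $3$-copy ``ladders'' plus ``rungs'' and grout filling a one-tile-wide frame between consecutive stages); that difference is cosmetic, but note that a naive seven-piece join leaves the single connecting cells between sub-copies as one-tile-wide seams, and how those seam tiles attach without recreating an aTAM-style bottleneck needs to be addressed explicitly. Second, your argument for the ``moreover'' clause assumes every terminal assembly arises by completing a stage-$s$ piece and then promoting it forever; with a finite tile set you must also argue that no producible assembly gets \emph{stuck} --- i.e., that every producible assembly is a subassembly of some stage-$n$ ladder that can always continue growing (the paper's Observations in its proof-of-correctness section), and that the boundary of $\bf{U}$ itself is eventually tiled, which the paper needs a separate ``boundary type'' gadget to achieve. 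As written, your proposal does not establish the theorem.
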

\vspace{-2pt}

We now introduce notation useful for describing the sets of points (including singleton sets) in a fractal. We start with a notation for the \emph{address} of a point in a stage $U_n$ of $\mathbf{U}$. Figure~\ref{fig:addresses} describes this notation for $U_3$. Similar notation for $U_n$ is defined recursively.

\begin{figure}
\vspace{-10pt}
    \centering
    \begin{minipage}[t]{0.70\textwidth}
        \centering
\includegraphics[width=1.9in]{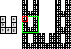}
\caption{(left) Address labels of each point in the generator of $\Ufrac$, (right) The black location is contained within stage three, and its address is $dab$ (i.e. it is location $d$ in a stage one copy (outlined in red), within location $a$ of a stage two copy (outlined in green), within location $b$ of stage three.) }
\label{fig:addresses}
    \end{minipage}
    \hfill
    ~ %
    \begin{minipage}[t]{0.25\textwidth}
        \centering
\includegraphics[width=1.2in]{}
\caption{The set of dark gray points of $U_3$ are referred to as a stage-$2$ ladder.}
\label{fig:stage-2-ladder}
    \end{minipage}
    \vspace{-16pt}
\end{figure}
The address of a point in $U_n$ is a string of $n$ symbols of $\{a, b, c, d, e, f, g\}$. Therefore, to define a subset, $S$ say, of points in $U_n$, it is convenient to use regular expressions to describe the strings corresponding to addresses of points in $S$.
Figure~\ref{fig:stage-2-ladder} depicts a set of points in $U_3$ which we refer to as a \emph{stage-$2$ ladder}. This set is defined by the regular expression $[defg][abc][ab]|$ $[abcdefg]d[ab]|$ $[abcd][efg][ab] |$ $ [defg][ab]c |$ $ [ef]cc | [abef]dc |$ $ [abcd][ef]c | [ab]gc$.

We also introduce terminology for some of the more important shapes that the 2HAM system which self-assembles $\bf{U}$ self-assembles. These shapes are \emph{stage-$n$ ladders}, \emph{left rungs}, and \emph{right rungs}. Figure~\ref{fig:2ladder} depicts a stage-$2$ ladder. The two rightmost supertiles in Figure~\ref{fig:stage2rungs} depict left and right rungs where the rightmost supertile is a right rung. Let $S_n$ by the set of points in $U_{n+1}$ with addresses given by the expression $.\{n\}[abc]$ (i.e. strings of length $n+1$ ending in $a$, $b$, or $c$. In other words, $S_n$ is $\{(x, y + m*3^n) | (x,y)\in U_n, m\in \{0, 1, 2\} \}$. Also let $B$ be the set of westernmost, easternmost, and sothernmost points of $S_n$. Then, a stage-$n$ ladder is the shape defined to be the points in $S_n \setminus B$. Figure~\ref{fig:2ladder} (right) depicts a supertile with the shape of a stage-$3$ ladder. We are now ready to present the construction which shows Theorem~\ref{thm:U-fractal}.

\subsection{$U$-fractal construction overview}\label{sec:ufractal}

\begin{figure}[htp]
\begin{center}
\includegraphics[width=1.7in]{}
\caption{A depiction of a stage-2 ladder (left) and a stage-3 ladder (right). Dark gray squares denote tile locations where tiles may contain an edge that has a special glue called an ``indicating glue''. The goal of the construction is to define a 2HAM system that 1) self-assembles $10$ types of stage-2 ladder supertiles (the type of a stage-$2$ ladder supertile depends on whether or not tiles at dark gray locations contain indicating glues), and 2) for $n\geq 3$ self-assembles $10$ types of stage-$n$ ladder supertiles from stage-$(n-1)$ ladder supertiles such that the stage-$n$ ladder supertile contains tiles that have indicating glues (at locations shown in dark gray locations in the figure on the right for stage-$3$ ladder supertiles).}
\label{fig:2ladder}
\end{center}
\end{figure}

In this section, we describe a 2HAM system that finitely self-assembles U. We do this by describing the supertiles producible in the 2HAM system and note that tiles can be defined so that these supertiles self-assemble.
We first describe \emph{base supertiles} that initially self-assemble and then describe how these base supertiles can bind to self-assemble supertiles that contain larger and larger stages of $\bf{U}$. In all, the supertiles which self-assemble in $\calT_{\bf{U}}$ are as follows.
\begin{enumerate}
\item $12$ different types of base supertiles that are hard-coded to self-assemble, $10$ of which have the shape of a stage-$2$ ladder, and $2$ of which have the shape of either a left or right rung. We call these supertiles stage-$2$ ladder supertiles and left or right rung supertiles respectively. Figures~\ref{fig:Ubase} and~\ref{fig:stage2rungs} (left two supertiles) depict the $10$ different stage-$2$ ladder supertiles. The two righmost supertiles shown in Figure~\ref{fig:stage2rungs} are left and right rung supertiles.
\item For each $n$, $12$ different types of supertiles self-assemble which have the shape of a stage-$n$ ladder. We call these supertiles stage-$n$ ladder supertiles. Figure~\ref{fig:2ladder} (right) shows a stage-$3$ ladder supertile.
\item Supertiles which we refer to as \emph{grout} supertiles are hard-coded to bind to stage-$n$ ladders for any $n\in \N$. For all $n \geq 2$, grout supertiles bind to stage-$n$ ladders (and also bind to left and right rungs as a special case) to yield supertiles that expose glues which bind in some assembly sequence to yield a stage-$(n+1)$ ladder. Figure~\ref{fig:grouted-stage2} depicts $6$ stage-$2$ ladders and $6$ stage-$2$ rungs with grout supertiles attached. We refer to a stage-$n$ ladder supertile (resp. rung supertile) with grout supertiles attached such that no more grout supertiles can attach as a \emph{grouted} stage-$n$ ladder supertiles (resp. rung supertile). Finally, grout supertiles that bind to stage-$n$ ladders are referred to as ``grout for stage-$(n+1)$''. As we will see there are $10$ different types of grout corresponding to the $10$ different types of stage-$2$ ladder supertiles.
\end{enumerate}

\begin{wrapfigure}{r}{0.4\textwidth}
\begin{center}
\vspace{-22pt}
\includegraphics[width=1in]{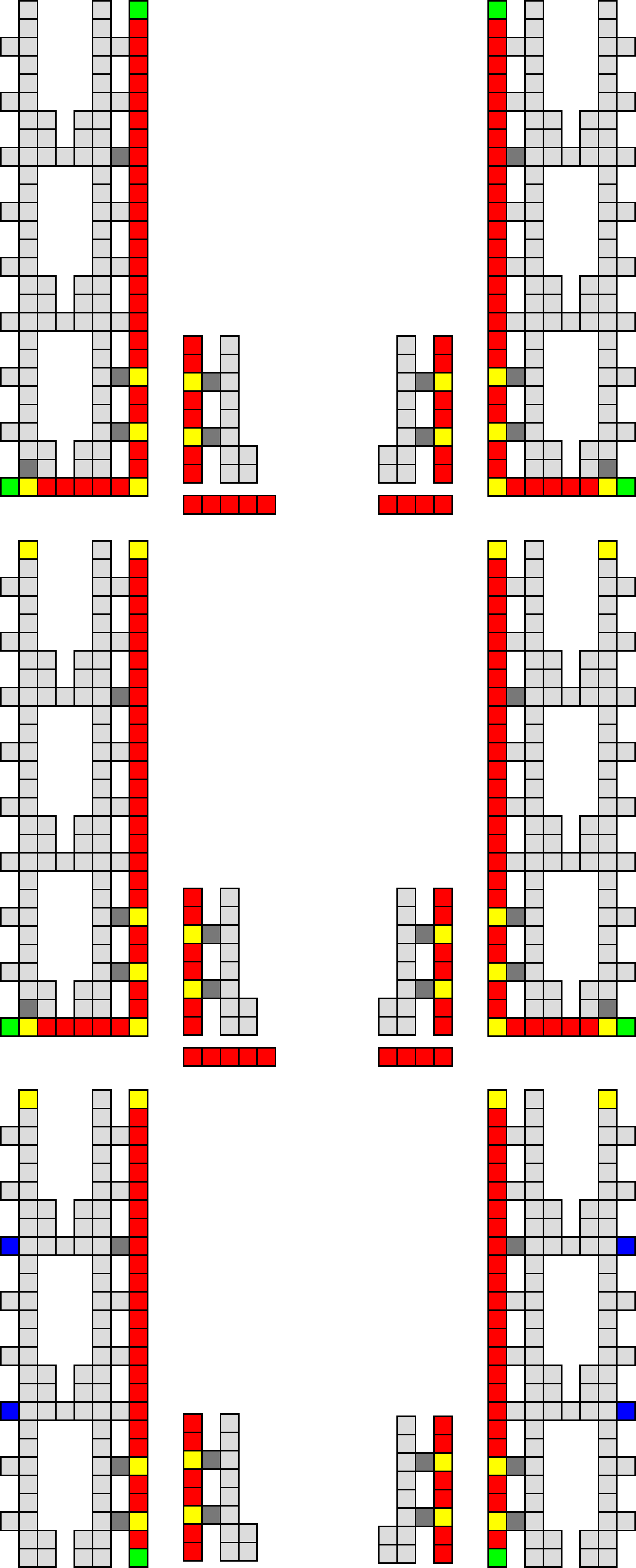}
\caption{A schematic depiction of grouted stage-$2$ ladder supertiles and grouted rung supertiles. There are $6$ types of ladder supertiles shown here. Tiles shown as yellow squares contain strength-$1$ glues which we call ``binding glues'' that allow the depicted grouted ladder supertiles to bind. Tiles shown as green or blue squares may contain edges with indicating glues and whether or not an indicating glues is on an edge of a tile at a green or blue location depends on which of the $10$ typegs of grout that binds (i.e. which type of stage-$3$ verson of a stage-$2$ ladder supertile is self-assembling.) Note that tiles in locations shown as blue squares are contained in a stage-$2$ ladder supertile.}
\label{fig:grouted-stage2}
\vspace{-40pt}
\end{center}
\end{wrapfigure}

Throughout this section we describe the self-assembly of the above supertiles by describing a particular assembly sequence. We note that there are many other assembly sequences for $\calT_{\bf{U}}$ and many possible producible supertiles. This is due to the fact that proper subassemblies of the supertiles described above are themselves producible. Nevertheless, we show that this nondeterminism does not prevent $\bf{U}$ from being finitely self-assembled. For now, we consider assembly sequences such that for $n\geq 3$, 1) stage-$(n-1)$ ladder supertiles completely self-assemble before grout supertiles for stage-$n$ bind, 2) grout for stage-$n$ binds to stage-$(n-1)$ ladder supertiles until a grouted stage-$n$ ladder supertile self-assembles (i.e. grout supertiles bind to stage-$(n-1)$ ladder supertiles until no other grout supertiles can bind), and 3) stage-$n$ ladder supertiles self-assemble from grouted ladder supertiles of previous stages. Figure~\ref{fig:sequence} depicts such an assembly sequence for $n=3$. Note that grout supertiles bind to completed stage-$2$ ladder and rung supertiles before the stage-$3$ ladder self-assembles.

Referring to Figure~\ref{fig:2ladder}, the main idea behind the construction is to defined a tile set which self-assembles base supertiles and grout supertiles. Grout supertiles bind to base supertiles to yield supertiles which in turn bind to yield stage-$3$ ladder and rung supertiles. In particular, the stage-$3$ ladder and rung supertiles which self-assemble are analogous to (i.e. are higher stage versions of) stage-$2$ base and rung supertiles. See Figure~\ref{fig:2ladder} (left and right) for more detail.
We now describe base and grout supertiles, the tiles that self-assemble them, as well as the assembly sequences for these supertiles and higher stages of $\bf{U}$ in more detail.

\begin{figure}[htp!]
\vspace{-15pt}
\centering
	   \includegraphics[width=4.5in]{}
   \caption{An assembly sequence where grouted stage-$2$ ladder and rung supertiles bind to yield a stage-$3$ ladder supertile. Note that the result of this assembly sequence is a stage-$3$ ladder supertile.\vspace{-0pt}} \label{fig:sequence}
\end{figure}

\vspace{-16pt}
\subsubsection{The $12$ base-supertiles}\label{sec:base-supertiles}

The tile set which initially self-assembles stage-$2$ ladder supertiles and rung supertiles are defined so that these supertiles contain tiles that expose special glues in specific locations; possible locations for special glues are shown in dark gray in the Figures~\ref{fig:Ubase} and~\ref{fig:stage2rungs}. We call these special glues \emph{indicating glues}. The purpose of indicating glues will be described in Section~\ref{sec:grout-supertiles}. In this section we describe the $12$ different types of base supertiles, starting with the $10$ stage-$2$ ladder supertiles.

\vspace{-10pt}
\begin{SCfigure}

\includegraphics[scale=.028]{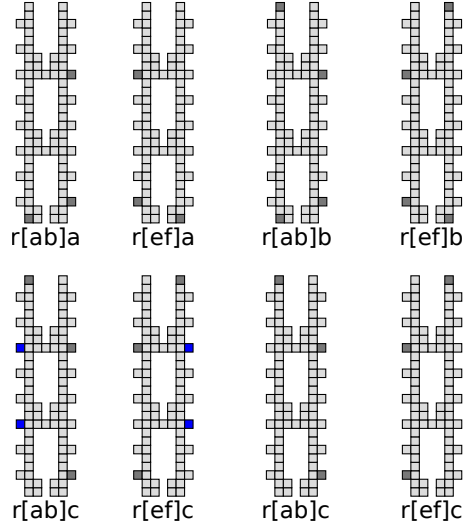}
\caption{(Right) A depiction of $8$ types of the stage-$2$ ladder supertiles. Each of the $8$ figures is labeled with a regular expression defining the set of points in $U_4$ where $r = (r1|r2)$ such that $r1 = [defg][abc]|[abcdefg]d|[abcd][efg]$ and $r2 = $\\ $[defg][ab]|[ef][cd]|[ab][dg]|[abcd][ab]$. The label also describes  where these stage-$2$ supertiles will be located within a stage-$3$ ladder supertile (the tile locations of which are a subset of $U_4$). We will use these labels to refer to a stage-$2$ ladder supertile type. We also note that there are two versions of stage-$2$ ladder supertiles with type $r[ab]c$ and two versions with type with type $r[ef]c$.}
\label{fig:Ubase}
\end{SCfigure}
\vspace{-20pt}

\begin{wrapfigure}{r}{0.5\textwidth}
\vspace{-12pt}
\centering
	   \includegraphics[scale=.028]{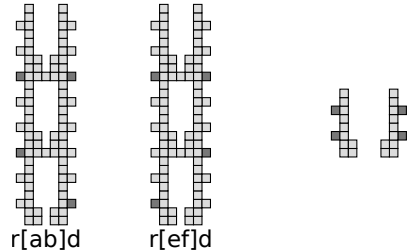}
   \caption{A depiction of $2$ stage-$2$ ladder supertiles labeled using the same scheme as described in Figure~\ref{fig:Ubase} (right) and  a depiction of stage-$2$ left and right rungs (right). The rightmost supertile is the right rung.} \label{fig:stage2rungs}
   \vspace{-24pt}
\end{wrapfigure}

Stage-$2$ ladder supertiles are hard-coded to self-assemble via particular assembly sequences described in Figure~\ref{fig:sequence-stage2ladder}. As we will see, enforcing such assembly sequences will help ensure proper self-assembly of consecutive ladder stages. For now, we assume that the stage-$2$ ladder supertiles completely self-assemble prior to binding to supertiles to yield larger assemblies. Tile types are defined so that $10$ different types of stage-$2$ ladder supertiles that self-assemble.
Referring to the stage-$2$ ladder supertiles in Figures~\ref{fig:Ubase} and ~\ref{fig:stage2rungs}, tiles can be hard-coded so that edges of tiles shown as dark gray squares expose indicating glues. The type of a stage-$2$ ladder supertile is uniquely determined by the locations and types of indicating glues on edges of the tiles that it contains. Moreover, for each base supertile, all of the indicating glues are distinct.
We note that a stage-$2$ ladder supertile's type also determines its location as a subassembly of a stage-$3$ ladder supertile.

\begin{wrapfigure}{r}{0.44\textwidth}
\vspace{-20pt}
\centering
	   \includegraphics[width=1.5in]{}
   \caption{To self-assemble each stage-$2$ ladder supertile, glues for each of the tiles in the supertile are hard-coded. In particular, the abutting edges of tiles at locations corresponding to each square of the left and middle supertiles shown here contain matching strength-$2$ glues and each such glue is unique for each base supertile. Tiles shown as blue squares of a stage-$2$ ladder supertile have strength-$2$ glues on their west edges and strength-$1$ glues on their east edge. This ensures that the ``left half'' (left) and ``right half'' (middle) (or portions of each) sufficiently self-assemble before each half binds. \vspace{-28pt} } \label{fig:sequence-stage2ladder}
\end{wrapfigure}

Except for tiles containing indicating glues, the non-abutting north (respectively south, east, and west) edges of
northernmost (respectively southernmost, easternmost and westernmost) tiles of complete stage-$2$ ladders contain strength-$1$ glues, all with the same glue type which we label $n$ ($s$, $e$, and $w$ respectively). We call such glues \emph{generic glues}. Generic glues are not shown in figures. The purpose of these glues is to facilitate the binding of grout supertiles as such supertiles bind to yield grouted stage-$2$ ladder supertiles. For each of the $10$ types of stage-$2$ ladder supertiles, tiles at locations depicted by gray squares in Figure~\ref{fig:Ubase} contain indicating glues (the purpose of which we describe in more detail next). Finally, in addition to stage-$2$ ladder supertiles, tiles are hard-coded so that left and right rungs self-assemble. These supertiles also contain indicating glues at tiles with locations shown as gray squares in Figure~\ref{fig:stage2rungs} (two leftmost figures). We next describe grout supertiles.

\vspace{-12pt}
\subsubsection{Grout supertiles}\label{sec:grout-supertiles}

There are $10$ different types of grout supertiles corresponding to the $10$ different types of stage-$2$ ladder supertiles. Intuitively, grout binds to ladder supertiles to yield grouted ladder supertiles. For $n\geq 3$, appropriate grouted ladder supertiles with stage less than $n$ bind to yield a stage-$n$ ladder supertile. The resulting stage-$n$ ladder supertile will contain tiles with edges that contain glues identical to the indicating glues of one of the $10$ types of stage-$2$ ladder supertiles. Therefore, the indicating glues of edges of tiles of a stage-$n$ ladder supertile determine the \emph{type} for the stage-$n$ ladder supertile. The type of stage-$n$ ladder supertile that results is determined by the type of grout that binds to the ladder supertiles with stage less than $n$ that bind to yield the stage-$n$ ladder supertile. Figure~\ref{fig:grouted-stage2} shows $6$ different types of stage-$2$ ladder supertiles bound to grout supertiles (shown in red, green, and yellow). The $4$ types of stage-$2$ ladder supertiles not shown in Figure~\ref{fig:grouted-stage2} only bind during the self-assembly of a stage-$n$ ladder supertile for $n \geq 4$. Figure~\ref{fig:grouted-stage2} also shows stage-$2$ left and right rungs that are bound to grout as well as grout supertiles which consist only of red tiles. Tiles belonging to supertiles depicted in Figure~\ref{fig:grouted-stage2} as yellow tiles expose binding glues which allow for the binding of these supertiles. The locations of these yellow tiles are determined by the indicating glues of the stage-$2$ ladder supertiles. We next describe the grout supertiles that bind and how they bind to $3$ types of stage-$2$ ladder supertiles. The grout supertiles that bind and how they bind to the other types of stage-$2$ ladder supertiles is similar.

\begin{figure}[htp!]
\vspace{-24pt}
\centering
	   \includegraphics[width=3in]{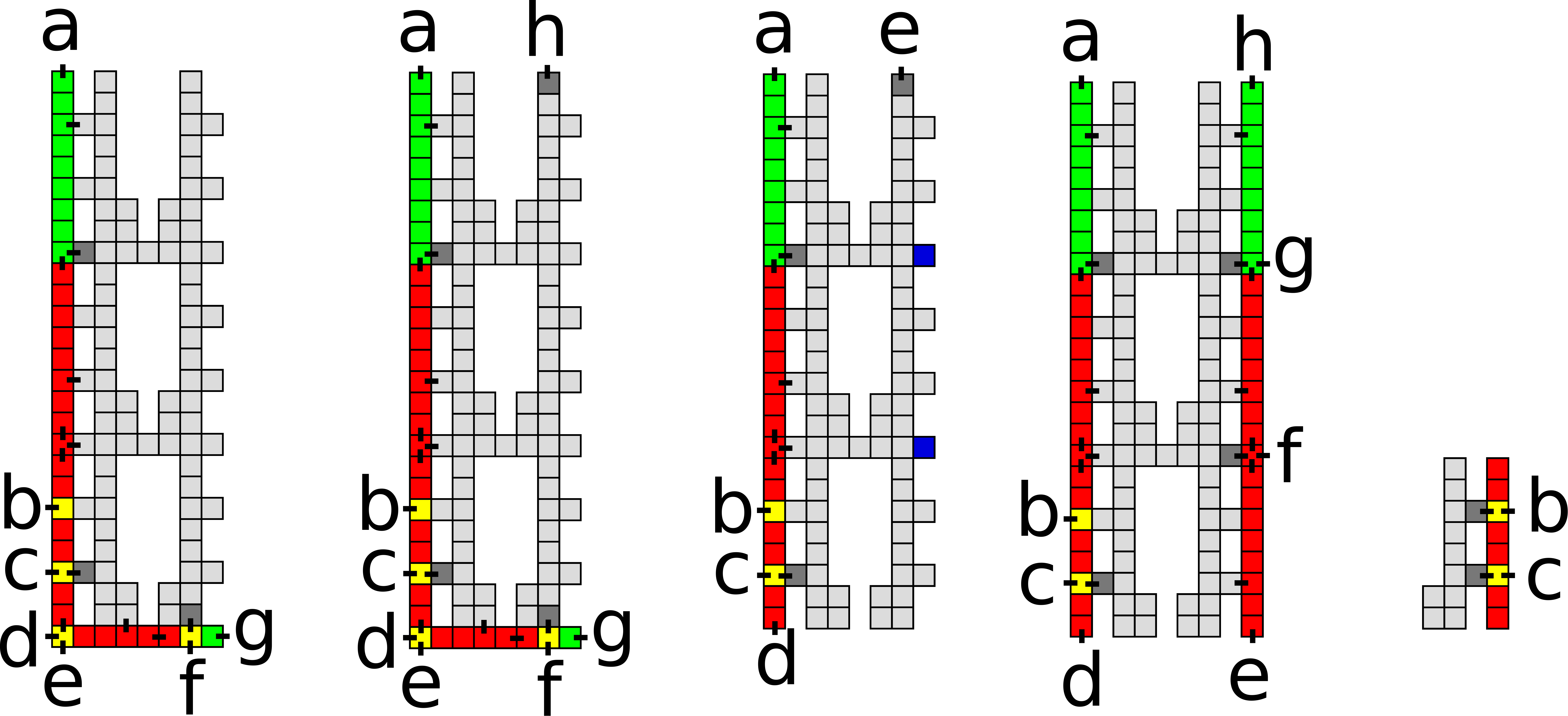}
   \caption{A schematic depiction of $5$ supertiles. From left to right, the first supertile is a grouted stage-$2$ ladder supertile with type $r[ef]a$, the next supertile is a grouted stage-$2$ ladder supertile with type $r[ef]b$, the next supertile is a grouted stage-$2$ ladder supertile with type $r[ef]c$, the next supertile is a grouted stage-$2$ with type $r[ef]d$, and the last supertile is a a grouted right rung supertile. Glue labels shown here are for reference purposes only and do not correspond to the label in the definition of the tile set for $\calT_{\bf{U}}$. Note that many of the glues of these supertiles are not depicted and the bound strength-$1$ glues shown here are intended to indicate how the grout supertiles cooperatively bind.\vspace{-20pt}} \label{fig:grout-sequence1}
   \vspace{-10pt}
\end{figure}

Like stage-$2$ ladder and rung supertiles, grout supertiles are hard-coded to self-assemble and there are $10$ different types of grout supertiles which self-assemble. We describe the grout supertiles which bind to the stage-$2$ ladder supertiles with types $r[ef]a$, $r[ef]b$, $r[ef]c$, and $r[ef]d$.
Let $L$ be a stage-$2$ ladder supertile with type $r[ef]b$. We denote as $L'$ the supertile that is the result of grout binding to $L$ until no more grout supertiles can bind.  We refer to the labels for the glues shown in the second figure from the left in Figure~\ref{fig:grout-sequence1}. First, we note that the grout supertiles shown with green tiles initially binds. The abutting edges of this supertile with no glues shown in the figure have strength-$2$ glues that hard-code the self-assembly of this supertile. This is also the case with the other grout supertiles shown in Figure~\ref{fig:grout-sequence1}. Note that grout supertiles are defined to cooperatively bind to $L$ to partially surround this supertile. We now describe the glues labeled $a$ through $h$. The glue labeled $a$ is a strength-$1$ glue that encodes the type of grout that binds to $L$. The glue labeled $h$ is a non-generic ``helper'' glue. Together $a$ and $h$ cooperate to permit the binding of $L'$ to a grouted stage-$2$ supertile with type $r[ef]a$, $B$ say, iff the grout types of $L'$ and $B$ are the same. The glues $b$ and $c$ belong to a grout supertile that only ever binds stage-$2$ ladder supertiles; this can be enforced by the definition of the tile types which self-assemble grout supertiles. $b$ and $c$ do not encode the grout type as this is not necessary for the construction, but they do allow for a grouted right rung supertile (such as the one depicted in the rightmost figure of Figure~\ref{fig:grout-sequence1}) to bind. As shown in Figure~\ref{fig:sequence}, this is important for allowing stage-$3$ ladder supertile to self-assemble from $L'$.
\begin{wrapfigure}{r}{0.3\textwidth}
\vspace{-24pt}
\centering
	   \includegraphics[width=.7in]{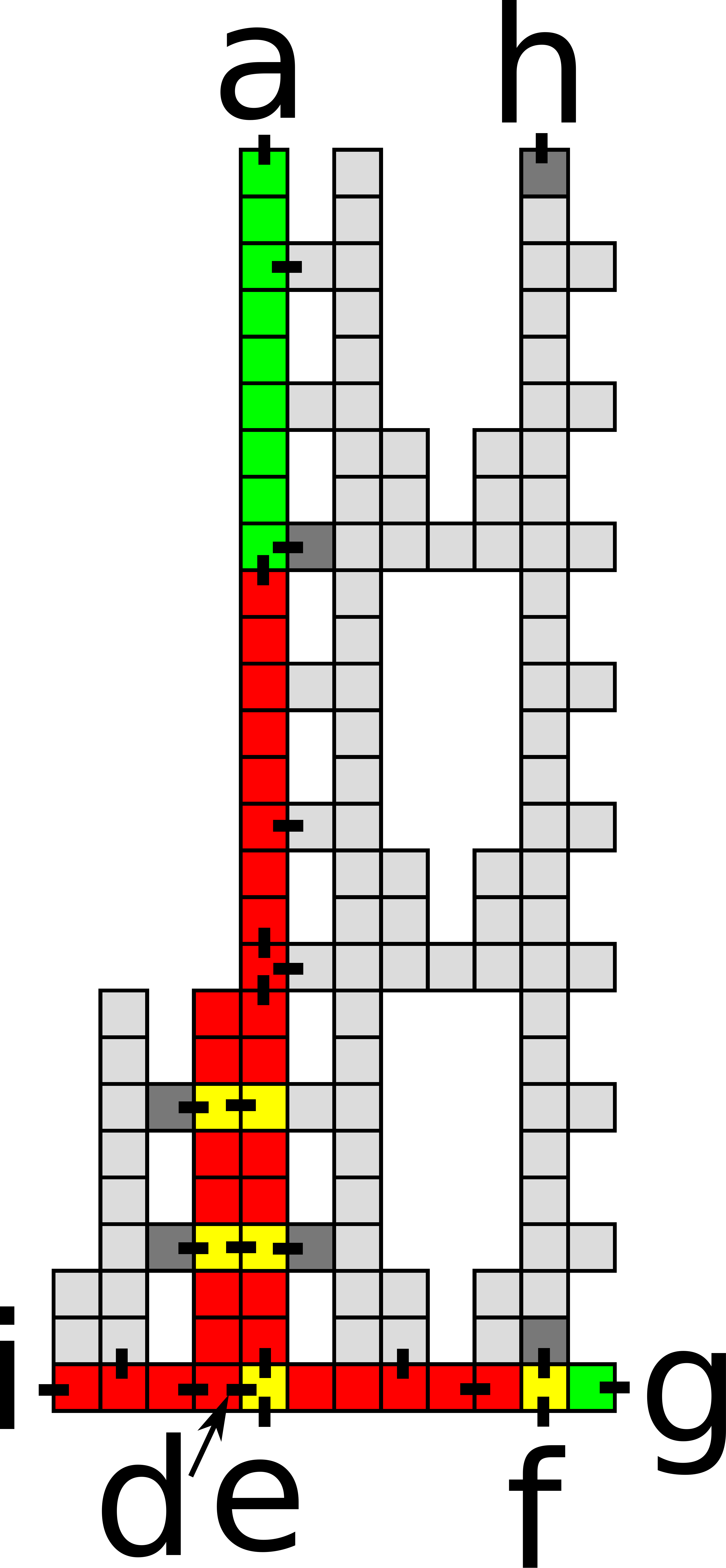}
   \caption{A schematic depiction of a grouted stage-$2$ ladder supertiles bound to a grouted right rung supertiles. \vspace{-26pt}} \label{fig:stage3rung}
\end{wrapfigure}
Then, just as glues $a$ and $h$ allow for a grouted stage-$2$ supertile to bind to glues of north edges of tiles of $L'$, $e$ and $f$ permit a grouted stage-$2$ supertile to bind to glues of south edges of tiles of $L'$. The glue labeled $g$ will either be an indicating glue or a generic glue (an $e$ glue in particular) depending on the type of grout that binds to $L$. If the grout type corresponds to type $r[ef]c$ or $r[ef]d$, then $g$ will be an indicating glue corresponding to the indicating glue of a tile of a stage-$2$ ladder supertile of type $r[ef]c$ or $r[ef]d$ respectively. The $d$ glue allows for grout supertiles to continue to bind after a grouted right rung supertiles binds. This scenario is depicted in Figure~\ref{fig:stage3rung}. Finally, the glue labeled $i$ in Figure~\ref{fig:stage3rung} encodes the grout type.

Now let $M$ be a stage-$2$ ladder supertile with type $r[ef]a$.  We refer to the glue labels for the glues shown in the leftmost figure in Figure~\ref{fig:grout-sequence1}. Most of these glues serve similar purposes to the glues of $L$ and there are two main differences. First, $a$ will either be a generic glue, $n$, or a glue which serves the same purpose as the glue $h$ in $L$. In the latter case, we call $a$ a ``helper glue''. If the type of grout that binds to $M$ is type $r[ef]b$ or $r[ef]c$, then $a$ will be a helper glue. This helper glue will facilitate the self-assembly of a stage-$4$ ladder supertile. If the type of grout that binds to $M$ is any other type of grout, then, $a$ is a generic glue. Finally, if the type of grout that binds to $M$ is $r[ab]a$, $r[ab]b$, $r[ab]c$, or $r[ab]d$, then the glue labeled $g$ is an indicating glue that is identical to the corresponding indicating glue of an edge of a tile in a stage-$2$ ladder supertile with type $r[ab]a$, $r[ab]b$, $r[ab]c$, or $r[ab]d$. Otherwise, $g$ will be a generic $e$ glue.

Next let $N$ be a stage-$2$ ladder supertile with type $r[ef]c$.  We refer to the glue labels for the glues shown in the third figure from the left in Figure~\ref{fig:grout-sequence1}. Once again, most of these glues serve similar purposes to the glues of $L$ or $M$. The main difference is that the $d$ glue is a generic $s$ glue and thus grout does not bind to the south edges of the southernmost tiles of $N$.
This is crucial for allowing grout to bind along these south edges in the assembly of higher ladder stages. At this point, we also note that there are two versions of stage-$2$ ladder supertile with type $r[ef]c$. The first version has two indicating glues, one on the east edge of each of the blue tiles in Figure~\ref{fig:grout-sequence1}, and the second version has generic $e$ glues instead of these indicating glues. Moreover, there are two versions of grout supertiles with type $r[ef]c$.  Grout with type $r[ab]a$, $r[ab]b$, $r[ab]c$ (both versions), or $r[ab]d$ can only bind to a stage-$2$ ladder with type $r[ef]c$ iff the type is of the first version. The purpose of the indicating glues on edges of these blue tiles will are utilized in the self-assembly of ladder supertiles of stage $\geq 4$

Finally let $P$ be a stage-$2$ ladder supertile with type $r[ef]d$.  We refer to the glue labels for the glues shown in the fourth figure from the left in Figure~\ref{fig:grout-sequence1}. Once again, most of these glues serve similar purposes to the glues of $N$. However, in this this case, there is one major difference. Namely, grout supertiles not only bind to the west edges of tiles of $P$, but they also bind to east edges as well. The green supertile with tiles containing edges with glues $g$ and $h$ initiate such growth. The glue labeled $h$ (resp. $e$) is a generic $n$ (resp. $s$) glue. The glues labeled $g$ and $f$ are binding glues. Glues $g$ and $h$ do not encode a grout type and are identical to the binding glues of a right rung supertile. This allows a grouted $P$ to serve the purpose of a grouted right rung supertile in the self-assembly of a stage-$4$ ladder.

\begin{wrapfigure}{r}{0.5\textwidth}
\vspace{-30pt}
\centering
	   \includegraphics[width=2in]{}
   \caption{A schematic depiction of a grouted stage-$3$ supertile. Note the similarity between the pattern of glues labeled here and the glues of the second figure from the left in Figure~\ref{fig:grout-sequence1}. Many of the glues not depicted here are strength-$2$ glues which are hard-coded to allow either grout supertiles to self-assemble, stage-$2$ ladder supertiles to self-assemble, or rung supertiles to self-assemble. Glues depicted as strength-$1$ glues are intended to indicate how grout supertiles cooperatively bind. Glue labels shown here are for reference purposes only and are not the labels in the definition of the tile set for $\calT_{\bf{U}}$.\vspace{-16pt}} \label{fig:grout-sequence-main}
\end{wrapfigure}	

Note that tile types which self-assemble grout supertiles that bind to stage-$2$ ladder and rung supertiles can be defined so that
1) tiles at locations corresponding to yellow squares in Figure~\ref{fig:grouted-stage2} contain edges with binding glues that permit the self-assembly a stage-$3$ ladder supertile,
and 2) binding glues depend (though not necessarily all of the glues will) on the type of grout which binds.
Binding glues enable appropriate grouted stage-$2$ ladder and/or rung supertiles to bind to yield a stage-$3$ ladder supertile. We also note that tile types which self-assemble grout supertiles can be defined so that
1) the grouted stage-$2$ ladder and/or rung supertiles which bind to yield a stage-$3$ ladder supertile all contain the same type of grout supertiles,
2)  tiles at locations corresponding to green squares in Figure~\ref{fig:grouted-stage2} contain edges with indicating glues,
and 3) the indicating glues of an edge of a tile in a stage-$3$ ladder supertile are identical to the indicating glues of exactly one type of stage-$2$ ladder supertile; which type depends on the type of grout supertiles contained in the stage-$3$ ladder supertile.

\vspace{-15pt}
\subsubsection{Finite self-assembly of stage-$n$ ladder supertiles for $n\geq 2$}\label{sec:stage-n}

In Section~\ref{sec:grout-supertiles} we saw that tile types can be defined to self-assemble base supertiles and grout supertiles such that there is an assembly sequences where these supertiles bind to yield stage-$3$ ladder supertiles.
Moreover, the stage-$3$ ladder supertiles which self-assemble contain tiles with edges that contain indicating glues that are identical to the indicating glues to one of the stage-$2$ ladder supertile types, giving $10$ types of stage-$3$ ladder supertiles.

For $n\geq 3$, we note that copies of the same grout supertiles which bind to stage-$2$ ladder and rung supertiles can bind to stage-$(n-1)$ ladder supertiles, yielding grouted stage-$(n-1)$ supertiles such that appropriate grouted stage-$(n-1)$ supertiles can bind to yield a stage-$n$ ladder supertile. Moreover,  the stage-$n$ ladder supertiles which self-assemble contain tiles with edges that contain indicating glues that are identical to the indicating glues to one of the stage-$(n-1)$ ladder supertile types, and thus identical to indicating glues of one of the stage-$2$ ladder supertiles. See Figure~\ref{fig:grout-sequence-main} for a depictions of how grout supertiles bind to a stage-$3$ ladder supertile with type $r[ef]b$.

\vspace{-16pt}
\subsection{Final remarks}
\vspace{-4pt}

Theorems~\ref{thm:H-impossible} and~\ref{thm:U-impossible} show that the $H$-fractal and the $U$-fractal cannot be finitely self-assembled by any aTAM system. Therefore, Theorem~\ref{thm:U-fractal} shows the power that hierarchical self-assembly has over single tile attachment by showing that there is 2HAM system which finitely self-assembles the $U$-fractal. We conjecture that one can also give a 2HAM system that finitely self-assembles the $H$-fractal.%

\vspace{-8pt}
\begin{conjecture}\label{conj:H-fractal}
Let $\bf{H}$ be the H-fractal DSSF. There exists a 2HAM TAS $\calT_{\bf{H}} = (T_{\bf{H}}, 2)$ that finitely self-assembles $\bf{H}$.
\end{conjecture}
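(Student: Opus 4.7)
The plan is to follow the overall strategy used to prove Theorem~\ref{thm:U-fractal}, with modifications tailored to the different connectivity pattern of $\bf{H}$. Recall that the $U$-fractal construction hinges on hierarchically self-assembling ladder-shaped supertiles (three stacked stage-$(n-1)$ copies on each side plus connecting rungs) and then gluing such ladders together. For $\bf{H}$, the generator is connected through the middle row rather than the bottom row, so the fundamental unit needs to change: instead of a stage-$n$ ladder (three vertical copies joined at the bottom), I would define a stage-$n$ \emph{H-column} consisting of three stacked stage-$(n-1)$ copies, and a stage-$n$ \emph{middle-rung} that plays the role of the $\bf{H}$-connector at the middle row.

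First, I would assign addresses $a,\ldots,g$ to the seven points of the $\bf{H}$-generator and, by direct analogy with Section~\ref{sec:ufractal}, define a collection of roughly $10$ base (stage-$2$) H-column supertile types plus a few rung supertile types. Each base type is distinguished by a pattern of indicating glues along its dark-gray boundary tiles, exactly as in the $\bf{U}$ construction; the type dictates the location that the supertile will later occupy inside a larger stage. Hard-coded strength-$2$ internal glues force these base supertiles to assemble fully before interacting with grout.

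Next, I would introduce a matching family of grout supertiles, one per base type, whose job is to surround an H-column or rung and expose binding glues (yellow tiles) and new indicating glues (green tiles) such that appropriately typed grouted stage-$(n-1)$ supertiles cooperatively bind to form a stage-$n$ H-column supertile whose indicating-glue pattern coincides with that of exactly one stage-$2$ base supertile. This is the key recursive invariant that lets the same grout rules drive every stage. The helper-glue mechanics (glues $a$, $h$, $e$, $f$ in the $\bf{U}$ case) would be adapted so that only supertiles of matching grout type may bind, which, together with the standard argument, guarantees that every finite producible supertile can be extended to a supertile containing $\fractal{H}{s}$ for any $s$, establishing finite self-assembly.

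The main obstacle I anticipate is that the mid-row connector in $\bf{H}$ forces two-sided cooperation in the interior of each stage: in $\bf{U}$ the rungs live at the bottom and only need to cooperate with the two adjacent columns from below, whereas in $\bf{H}$ the middle-rung sits sandwiched between top and bottom thirds, so the grout and rung machinery must allow the middle connection to form without triggering out-of-bounds growth into the empty pockets that appear above and below the connector at every scale. Resolving this will likely require an extra type of grout supertile, plus additional helper glues on the north \emph{and} south edges of middle-row tiles, whose purpose is to suppress indicating/binding glues on whichever side faces an empty pocket of $\bf{H}$ while still permitting the legitimate middle attachment. Once this enhanced middle-rung grout is in place, the termination, nondeterminism-handling, and finite self-assembly arguments for Theorem~\ref{thm:U-fractal} transfer essentially verbatim, yielding the desired $\calT_{\bf{H}}$.
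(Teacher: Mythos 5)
The statement you are trying to prove is not proved in the paper at all: it is stated explicitly as Conjecture~\ref{conj:H-fractal}, and the authors only remark that they \emph{conjecture} a 2HAM system for $\Hfrac$ exists. So there is no proof in the paper to compare against, and the relevant question is whether your proposal actually closes the gap the authors left open. It does not. Your plan is a faithful transcription of the $\Ufrac$ construction's architecture (typed base supertiles distinguished by indicating glues, grout supertiles that surround a completed stage and re-expose the indicating-glue pattern of one of the base types, and the recursive invariant that a stage-$n$ supertile ``looks like'' a stage-$2$ one), and you correctly identify the one place where the analogy breaks: the connector of $\Hfrac$ sits at generator position $(1,1)$ rather than $(1,0)$, so it is flanked by empty notches at $(1,0)$ and $(1,2)$ at every scale, one opening downward and one opening upward. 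But at exactly that point the proposal retreats to ``this will likely require an extra type of grout supertile, plus additional helper glues,'' which is an assertion that the difficulty can be engineered away, not an argument that it can. The entire content of the conjecture lives in that step.

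To see why this is not a routine patch, note what the grout actually does in the $\Ufrac$ construction: it crawls around the \emph{exterior} of a completed ladder, turning corners into the single upward-opening pocket, and the correctness of that crawl (Section~\ref{sec:proper-assembly-appendix}) depends on a delicate ordering argument --- blue tiles binding before the $a$/$h$ helper glues become usable --- to rule out the race condition of Figure~\ref{fig:spurious-growth}, plus a bound on how many indicating glues the grout must consume before it is allowed to turn a corner. For $\Hfrac$ you would need grout to service \emph{two} pockets per stage with opposite orientations, and the middle-rung must cooperatively bind between two large column assemblies along interfaces adjacent to both pockets, while the binding glues exposed there must not also be presentable on the pocket-facing edges of earlier-stage sub-assemblies (the exact failure mode of Figure~\ref{fig:spurious-growth}, now occurring on two sides simultaneously). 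Until you exhibit the modified grout types and the analogue of the blocking/ordering argument that excludes erroneous binding into both notches at every stage, and verify that every finite producible assembly still extends to one containing $\fractal{H}{s}$ for all $s$, you have a plausible research plan, not a proof.
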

\vspace{-8pt}

We've described the self-assembly of stage-$n$ ladder supertiles via particular assembly sequences of $\calT_{\bf{U}}$, ignoring many others and many producible supertiles. 
Section~\ref{sec:construction-appendix} 
describes how our construction ensures finite self-assembly of $\bf{U}$ despite these many possible assembly sequence and producible supertiles. Finally, our system self-assembles higher and higher stages of the ladder supertiles.
Note that $\bf{U}$, by definition, only contains points in the first quadrant of the plane. Moreover, the westernmost points (resp. southernmost points) are a vertical (resp. horizontal) line of points. We call these points the ``boundary'' of $\bf{U}$. Only self-assembling higher and higher stages of ladder supertiles would give a system that finitely self-assembles $\bf{U}$ without points on the boundary.  
In Section~\ref{sec:construction-appendix} 
we give a simple tweak that ensures there is an assembly sequence from any producible assembly sequence to a terminal assembly with domain equal to $\bf{U}$ (including boundary points).
\vspace{-6pt}

\vspace{-10pt}
\bibliographystyle{abbrv} %
\bibliography{tam,experimental_refs}

\begin{thebibliography}{10}

\bibitem{TreeFractals}
K.~Barth, D.~Furcy, S.~M. Summers, and P.~Totzke.
\newblock Scaled tree fractals do not strictly self-assemble.
\newblock In {\em Unconventional Computation \& Natural Computation (UCNC)
  2014, University of Western Ontario, London, Ontario, Canada {\rm July 14-18,
  2014}}, pages 27--39, 2014.

\bibitem{Versus}
S.~Cannon, E.~D. Demaine, M.~L. Demaine, S.~Eisenstat, M.~J. Patitz, R.~T.
  Schweller, S.~M. Summers, and A.~Winslow.
\newblock Two hands are better than one (up to constant factors): Self-assembly
  in the 2ham vs. atam.
\newblock In N.~Portier and T.~Wilke, editors, {\em STACS}, volume~20 of {\em
  LIPIcs}, pages 172--184. Schloss Dagstuhl - Leibniz-Zentrum fuer Informatik,
  2013.

\bibitem{MHAM}
C.~T. Chalk, D.~A. Fernandez, A.~Huerta, M.~A. Maldonado, R.~T. Schweller, and
  L.~Sweet.
\newblock Strict self-assembly of fractals using multiple hands.
\newblock {\em Algorithmica}, pages 1--30, 2015.

\bibitem{CheDot12}
H.-L. Chen and D.~Doty.
\newblock Parallelism and time in hierarchical self-assembly.
\newblock In {\em SODA 2012: Proceedings of the 23rd Annual ACM-SIAM Symposium
  on Discrete Algorithms}, pages 1163--1182. SIAM, 2012.

\bibitem{AGKS05g}
Q.~Cheng, G.~Aggarwal, M.~H. Goldwasser, M.-Y. Kao, R.~T. Schweller, and P.~M.
  de~Espan\'{e}s.
\newblock Complexities for generalized models of self-assembly.
\newblock {\em SIAM Journal on Computing}, 34:1493--1515, 2005.

\bibitem{DDFIRSS07}
E.~D. Demaine, M.~L. Demaine, S.~P. Fekete, M.~Ishaque, E.~Rafalin, R.~T.
  Schweller, and D.~L. Souvaine.
\newblock Staged self-assembly: nanomanufacture of arbitrary shapes with
  ${O}(1)$ glues.
\newblock {\em Natural Computing}, 7(3):347--370, 2008.

\bibitem{FujHarParWinMur07}
K.~Fujibayashi, R.~Hariadi, S.~H. Park, E.~Winfree, and S.~Murata.
\newblock Toward reliable algorithmic self-assembly of {DNA} tiles: A
  fixed-width cellular automaton pattern.
\newblock {\em Nano Letters}, 8(7):1791--1797, 2007.

\bibitem{STAM-fractals}
J.~Hendricks, M.~Olsen, M.~J. Patitz, T.~A. Rogers, and H.~Thomas.
\newblock Hierarchical self-assembly of fractals with signal-passing tiles
  (extended abstract).
\newblock In {\em Proceedings of the 22nd International Conference on DNA
  Computing and Molecular Programming (DNA 22),
  Ludwig-Maximilians-Universität, Munich, Germany {\rm September 4-8, 2016}},
  pages 82--97.

\bibitem{JacobUCNC2017}
J.~Hendricks and J.~Opseth.
\newblock Self-assembly of 4-sided fractals in the two-handed tile assembly
  model.
\newblock In {\em Proceedings of the 16th Annual Confreence on Unconventional
  Computation and Natural Computation (UCNC 2017), Fayetteville, Arkansas, USA
  {\rm June 5-9, 2017}}, pages 113--128.

\bibitem{JonoskaSignals1}
N.~Jonoska and D.~Karpenko.
\newblock Active tile self-assembly, part 1: Universality at temperature 1.
\newblock {\em International Journal of Foundations of Computer Science},
  25(02):141--163, 2014.

\bibitem{JonoskaSignals2}
N.~Jonoska and D.~Karpenko.
\newblock Active tile self-assembly, part 2: Self-similar structures and
  structural recursion.
\newblock {\em International Journal of Foundations of Computer Science},
  25(02):165--194, 2014.

\bibitem{jKautzShutters}
S.~Kautz and B.~Shutters.
\newblock Self-assembling rulers for approximating generalized sierpinski
  carpets.
\newblock {\em Algorithmica}, 67(2):207--233, 2013.

\bibitem{KL09}
S.~M. Kautz and J.~I. Lathrop.
\newblock Self-assembly of the {S}ierpinski carpet and related fractals.
\newblock In {\em Proceedings of The Fifteenth International Meeting on
  {D}{N}{A} Computing and Molecular Programming (Fayetteville, Arkansas, USA,
  June 8-11, 2009)}, pages 78--87, 2009.

\bibitem{SSADST}
J.~I. Lathrop, J.~H. Lutz, and S.~M. Summers.
\newblock Strict self-assembly of discrete {S}ierpinski triangles.
\newblock In {\em Proceedings of The Third Conference on Computability in
  Europe (Siena, Italy, June 18-23, 2007)}, 2007.

\bibitem{jSSADST}
J.~I. Lathrop, J.~H. Lutz, and S.~M. Summers.
\newblock Strict self-assembly of discrete {S}ierpinski triangles.
\newblock {\em Theoretical Computer Science}, 410:384--405, 2009.

\bibitem{Luhrs08}
C.~Luhrs.
\newblock Polyomino-safe {DNA} self-assembly via block replacement.
\newblock In A.~Goel, F.~C. Simmel, and P.~Sos{\'i}k, editors, {\em DNA14},
  volume 5347 of {\em Lecture Notes in Computer Science}, pages 112--126.
  Springer, 2008.

\bibitem{LutzShutters12}
J.~H. Lutz and B.~Shutters.
\newblock Approximate self-assembly of the sierpinski triangle.
\newblock {\em Theory Comput. Syst.}, 51(3):372--400, 2012.

\bibitem{2HAM-temp1}
M.~J. Patitz, T.~A. Rogers, R.~T. Schweller, S.~M. Summers, and A.~Winslow.
\newblock Resiliency to multiple nucleation in temperature-1 self-assembly.
\newblock In {\em Proceedings of the 22nd International Conference on DNA
  Computing and Molecular Programming (DNA 22),
  Ludwig-Maximilians-Universität, Munich, Germany {\rm September 4-8, 2016}},
  pages 98--113.

\bibitem{jSADSSF}
M.~J. Patitz and S.~M. Summers.
\newblock Self-assembly of discrete self-similar fractals.
\newblock {\em Natural Computing}, 1:135--172, 2010.

\bibitem{RoPaWi04}
P.~W. Rothemund, N.~Papadakis, and E.~Winfree.
\newblock Algorithmic self-assembly of {DNA} {S}ierpinski triangles.
\newblock {\em PLoS Biology}, 2(12):2041--2053, 2004.

\bibitem{Winf98}
E.~Winfree.
\newblock {\em Algorithmic Self-Assembly of {D}{N}{A}}.
\newblock PhD thesis, California Institute of Technology, June 1998.

\end{thebibliography}

\newpage
\appendix
\begin{center}
	\Huge\bfseries
	Technical Appendix
\end{center}

\section{Informal definition of the aTAM}\label{sec:atam-informal}

This section gives a brief definition of the abstract Tile Assembly Model (aTAM) and related terminology. We use definitions, terminology, and notation from \cite{SSADST}.
A \emph{tile type} is a unit square with four sides, each consisting of a \emph{glue label}, often represented as a finite string, and a nonnegative integer \emph{strength}. A glue~$g$ that appears on multiple tiles (or sides) always has the same strength~$s_g$. %
There are a finite set $T$ of tile types, but an infinite number of copies of each tile type, with each copy being referred to as a \emph{tile}. An \emph{assembly}
is a positioning of tiles on the integer lattice $\Z^2$, described  formally as a partial function $\alpha:\Z^2 \dashrightarrow T$. %
Let $\mathcal{A}^T$ denote the set of all assemblies of tiles from $T$, and let $\mathcal{A}^T_{< \infty}$ denote the set of finite assemblies of tiles from $T$.
We write $\alpha \sqsubseteq \beta$ to denote that $\alpha$ is a \emph{subassembly} of $\beta$, which means that $\dom\alpha \subseteq \dom\beta$ and $\alpha(p)=\beta(p)$ for all points $p\in\dom\alpha$.
Two adjacent tiles in an assembly \emph{interact}, or are \emph{attached}, if the glue labels on their abutting sides are equal and have positive strength. %
Each assembly induces a \emph{binding graph}, a grid graph whose vertices are tiles, with an edge between two tiles if they interact.
The assembly is \emph{$\tau$-stable} if every cut of its binding graph has strength at least~$\tau$, where the strength   of a cut is the sum of all of the individual glue strengths in the cut.

A \emph{tile assembly system} (TAS) is a triple $\calT = (T,\sigma,\tau)$, where $T$ is a finite set of tile types, $\sigma:\Z^2 \dashrightarrow T$ is a finite, $\tau$-stable \emph{seed assembly},
and $\tau$ is the \emph{temperature} (i.e. the minimum binding threshold for a tile).
An assembly $\alpha$ is \emph{producible} if either $\alpha = \sigma$ or if $\beta$ is a producible assembly and $\alpha$ can be obtained from $\beta$ by the stable binding of a single tile.
In this case we write $\beta\to_1^\calT \alpha$ (to mean~$\alpha$ is producible from $\beta$ by the attachment of one tile), and we write $\beta\to^\calT \alpha$ if $\beta \to_1^{\calT*} \alpha$ (to mean $\alpha$ is producible from $\beta$ by the attachment of zero or more tiles).
When $\calT$ is clear from context, we may write $\to_1$ and $\to$ instead.
An assembly sequence in a TAS $\mathcal{T}$ is a (finite or infinite) sequence $\vec{\alpha} = (\alpha_0,\alpha_1,\ldots)$ of assemblies in which each $\alpha_{i+1}$ is obtained from $\alpha_i$ by the addition of a single tile. The \emph{result} $\res{\vec{\alpha}}$ of such an assembly sequence is its unique limiting assembly. (This is the last assembly in the sequence if the sequence is finite.)
Let $\vec{\alpha}$ be an assembly sequence. In the following, $\vec{\alpha}[i]$ denotes the tile that $\vec{\alpha}$ places at assembly step $i$. We say that $\vec{\alpha}[i]$ is the \emph{parent} of $\vec{\alpha}[j]$ if $i < j$ and $\vec{\alpha}[j]$ binds to $\vec{\alpha}[i]$. Furthermore, we say that tile $\vec{\alpha}[i]$ is the \emph{ancestor} of a tile $\vec{\alpha}[k]$ if either $\vec{\alpha}[i]$ is the parent of $\vec{\alpha}[k]$, or there exists an index $j$, such that, $i < j < k$, $\vec{\alpha}[j]$ is the parent of $\vec{\alpha}[k]$ and $\vec{\alpha}[i]$ is the ancestor of $\vec{\alpha}[j]$. Note that $\vec{\alpha}[j]$ implicitly refers to both the type of tile and its location, and the parent and ancestor relationships, in general, depend on the given assembly sequence $\vec{\alpha}$.

We let $\prodasm{\calT}$ denote the set of producible assemblies of $\calT$.
An assembly is \emph{terminal} if no tile can be $\tau$-stably attached to it.
We let   $\termasm{\calT} \subseteq \prodasm{\calT}$ denote  the set of producible, terminal assemblies of $\calT$.
A TAS $\calT$ is \emph{directed} if $|\termasm{\calT}| = 1$. %
A set $X$ \emph{strictly self-assembles} in a TAS $\calT$ if
every assembly $\alpha\in\termasm{T}$ satisfies $\dom \alpha =
X$. Essentially, strict self-assembly means that tiles are placed
in exactly the positions defined by the shape of $X$, and this is true for all assembly sequences of $\calT$.
For an infinite shape $X\subseteq \Z^2$, we say that $\calT$ finitely self-assembles $X$ if every finite producible assembly of $\calT$ has a possible way of growing into an assembly that places tiles exactly on those points in $X$.

\section{Informal definition of the 2HAM}\label{sec:2ham-informal}

The 2HAM \cite{AGKS05g,DDFIRSS07} is a generalization of the aTAM in that it allows for two assemblies, both possibly consisting of more than one tile, to attach to each other. Since we must allow that the assemblies might require translation before they can bind, we define a \emph{supertile} to be the set of all translations of a $\tau$-stable assembly, and speak of the attachment of supertiles to each other, modeling that the assemblies attach, if possible, after appropriate translation. We use definitions, terminology, and notation from \cite{Versus}.
We now give a brief, informal, sketch of the 2HAM.

A \emph{tile type} is a unit square with each side having a \emph{glue} consisting of a \emph{label} (a finite string) and \emph{strength} (a non-negative integer).   We assume a finite set $T$ of tile types, but an infinite number of copies of each tile type, each copy referred to as a \emph{tile}.
A \emph{supertile} is (the set of all translations of) a positioning of tiles on the integer lattice $\Z^2$.  Two adjacent tiles in a supertile \emph{interact} if the glues on their abutting sides are equal and have positive strength.
Each supertile induces a \emph{binding graph}, a grid graph whose vertices are tiles, with an edge between two tiles if they interact.
The supertile is \emph{$\tau$-stable} if every cut of its binding graph has strength at least $\tau$, where the weight of an edge is the strength of the glue it represents.
That is, the supertile is stable if at least energy $\tau$ is required to separate the supertile into two parts.  Note that throughout this paper, we will use the term \emph{assembly} interchangeably with supertile.

A \emph{(two-handed) tile assembly system} (\emph{TAS}) is an ordered triple $\mathcal{T} = (T, S, \tau)$, where $T$ is a finite set of tile types, $S$ is the \emph{initial state}, and $\tau\in\N$ is the temperature.  For notational convenience we sometimes describe $S$ as a set of supertiles, in which case we actually mean that $S$ is a multiset of supertiles with one count of each supertile. We also assume that, in general, unless stated otherwise, the count for any single tile in the initial state is infinite.  Commonly, 2HAM systems are defined as pairs $\mathcal{T} = (T, \tau)$, with the initial state simply consisting of an infinite number of copies of each singleton tile type of $T$, and throughout this paper this is the notation we will use.

Given a TAS $\calT=(T,\tau)$, a supertile is \emph{producible}, written as $\alpha \in \prodasm{T}$, if either it is a single tile from $T$, or it is the $\tau$-stable result of translating two producible assemblies without overlap.
A supertile $\alpha$ is \emph{terminal}, written as $\alpha \in \termasm{T}$, if for every producible supertile $\beta$, $\alpha$ and $\beta$ cannot be $\tau$-stably attached. Informally, an \emph{assembly sequence} of a TAS $\calT$ is a sequence of states (sets of supertiles) $\vec{S} = (S_i \mid 0 \leq i < k)$ (where $k = \infty$ if $\vec{S}$ is an infinite assembly sequence) where $S_{i+1}$ is obtained from $S_i$ by picking two supertiles from $S_i$ that can attach to each other, and attaching them. We always assume that $S_0$ is contrained so that an assembly sequence tends toward a unique supertile. The \emph{result} of an assembly sequence is then defined to be this unique supertile.

A TAS is \emph{directed} if it has only one terminal, producible supertile. A set, or shape, $X$ \emph{strictly self-assembles} if there is a TAS $\mathcal{T}$ for which every assembly $\alpha\in\termasm{T}$ satisfies $\dom \alpha = X$. Essentially, strict self-assembly means that tiles are only placed in positions defined by the shape.  This is in contrast to the notion of \emph{weak self-assembly} in which only specially marked tiles can and must be in the locations of $X$ but other locations can perhaps receive tiles of other types.
For an infinite shape $X\subseteq \Z^2$, we say that $\calT$ finitely self-assembles $X$ if every finite producible assembly of $\calT$ has a possible way of growing into an assembly that places tiles exactly on those points in $X$. In this paper we consider finite self-assembly of DSSF's (in the strict sense).

\section{Discrete Self-Similar Fractals}\label{sec:dssf-def}

We define $\mathbb{N}_g$ as the subset $\{0,1,...,g-1\}$ of $\mathbb{N}$, and
if $A,B \subseteq \mathbb{N}^2$, then $A+(x,y)B = \{(x_a,y_a) + (x\cdot x_b,y \cdot y_b) | (x_a,y_a) \in A$ and $(x_b,y_b) \in B\}$.  We then define discrete self-similar fractals as follows:

We say that $\bf{X} \subset \mathbb{N}^2$ is a \emph{discrete self-similar fractal} (or \emph{DSSF} for short) if there
exists a set $\{(0,0)\} \subset G \subset \mathbb{N}^2$ where $G$ is connected, $w_G = \max(\{x|(x,y) \in G\})+1$, $h_G = \max(\{y|(x,y)\in G\})+1$, $w_G$ and $h_G > 1$, and $G \subsetneq \mathbb{N}_{w_G} \times \mathbb{N}_{h_G}$, such that $\bf{X}$ $ = \bigcup^\infty_{i=1} X_i$, where $X_i$, the $i^{th}$ stage of $\bf{X}$, is defined by $X_1 = G$ and $X_{i+1} = X_i + (w_G^i,h_G^i)G$.  We say that $G$ is the generator of $\bf{X}$.  Essentially, the generator is a connected set of points in $\mathbb{N}^2$ containing $(0,0)$, points at both $x>0$ and $y>0$, and is not a completely filled rectangle.  Every stage after the constructor is composed of copies of the previous stage arranged in the same pattern as the generator.

A connected discrete self-similar fractal is one in which every component is connected in every stage, i.e. there is only one connected component in the grid graph formed by the points of the shape.

Figure~\ref{fig:triangle} shows, as an example, the first $4$ stages of the discrete self-similar fractal known as the Sierpinski triangle.  In this example, $G = \{(0,0),(1,0),(0,1)\}$.

\begin{figure}[htp]
\begin{center}
\vspace{-15pt}
\includegraphics[width=2.0in]{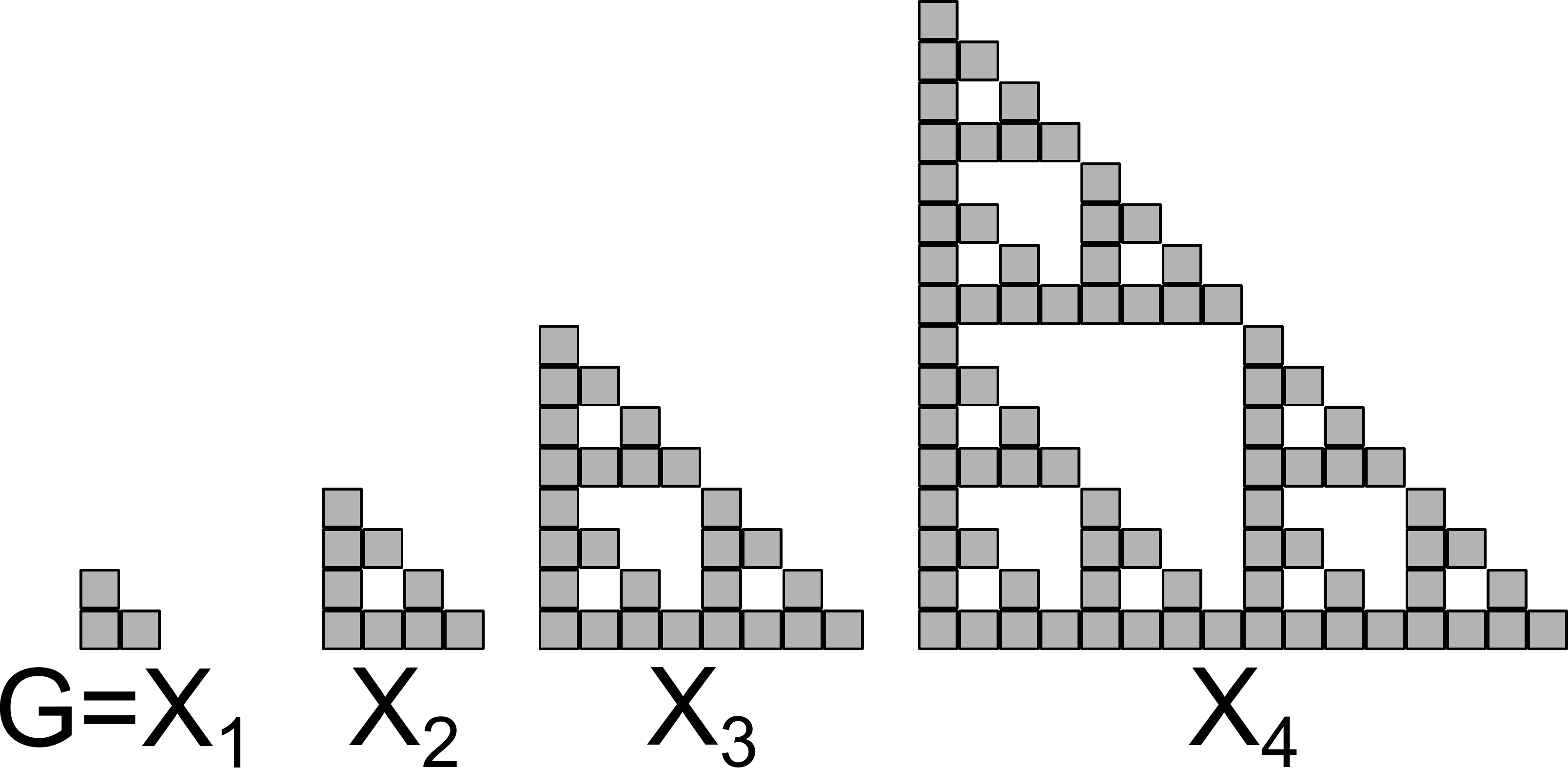}
\caption{Example discrete self-similar fractal:  the first $4$ stages of the Sierpinski triangle\vspace{-10pt}}
\label{fig:triangle}
\end{center}
\end{figure}

\section{Full proof of the impossibility of finite self-assembly of the $\Hfrac$ fractal in the aTAM}\label{sec:aTAM-full-impossible}

In this section, we give the full proof of Theorem~\ref{thm:H-impossible} showing that $\Hfrac$ does not finitely self-assemble in the aTAM.

\begin{proof}
For the sake of obtaining a contradiction, assume there exists an aTAM TAS $\calT = (T,\sigma,\tau)$ in which $\Hfrac$ finitely self-assembles.  We will show that $\Hfrac$ does not finitely self-assemble in $\mathcal{T}$.  Without loss of generality, we will assume that $|\sigma| = 1$, i.e. that $\calT$ is singly-seeded but our proof technique will hold for any TAS $\mathcal{T}$ with finite seed assembly. Since the location of $\sigma$ must be within $\Hfrac$, let $s$ be the stage number of the smallest stage of $\Hfrac$ which contains $\sigma$.

Let $c = 6|T|^6$.  If $\Hfrac$ finitely self-assembles in $\calT$, then every producible assembly in $\calT$ has domain contained in $\Hfrac$. Let $\vec{\alpha}$ be the shortest assembly sequence in $\calT$ whose result has domain $h_{c+s+2}$ (where $h_i$ is the $i$th stage of $\Hfrac$), subject to the additional constraint that, when multiple locations could receive a tile in a given step, $\vec{\alpha}$ always places a tile in a location of the smallest possible stage.

By our choice of $c$, we know that there are at least 6 stages of $\Hfrac$ whose respective bottleneck points are identically tiled by $\vec{\alpha}$.
Since, in any assembly sequence, the center tile of each stage of $\Hfrac$ either has a parent adjacent to the left or right, it follows, without loss of generality, that there are at least 3 stages, namely $h_i$, $h_j$ and $h_k$, for $i < j < k$, whose respective bottleneck points are identically tiled by $\vec{\alpha}$ and whose, respective center tiles have parents adjacent to the left.

\begin{figure}[htp]
\begin{center}
\includegraphics[width=2.0in]{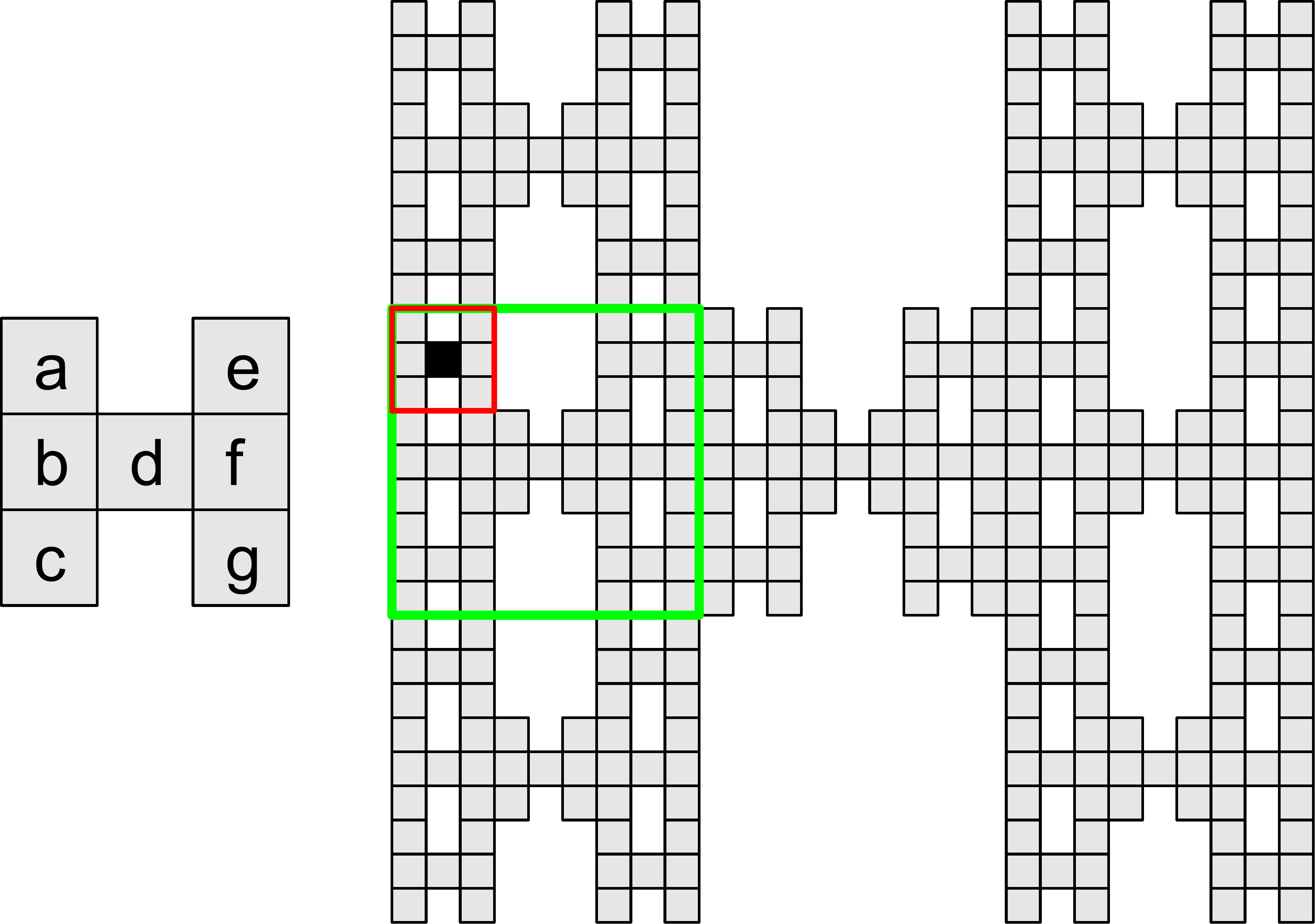}
\caption{(left) Address labels of each point in the generator of $\Hfrac$, (right) The black location is contained within stage three, and its address is $dab$ (i.e. it is location $d$ in a stage one copy (outlined in red), within location $a$ of a stage two copy (outlined in green), within location $b$ of stage three.)}
\label{fig:H-addresses}
\end{center}
\end{figure}

Relative to $\vec{\alpha}$, there are three cases to consider: (1) (2) some top-left-placed (bottom-left-placed) tile of the left-center of $h_j$ is placed at a point that is vertical distance $1/3$ toward the middle-left bottleneck tile of $h_j$ or (3) middle-left-placed tiles in $h_j$ are placed at points at least vertical distance $2/3$ up \emph{and} down toward the top-left and bottom-left bottleneck tiles, respectively. Note that, if none of these cases apply, then the left-center of $h_j$ would not assemble completely and $\Hfrac$ would not finitely self-assemble in $\mathcal{T}$.

\paragraph {\bf Case 1:} First, assume that some top-left-placed tile of the left-center of $h_j$ is placed at a point vertical distance $1/3$ toward the middle-left bottleneck tile of $h_j$.

We assign to each point $\vec{x} \in \Hfrac$ an address, which specifies its relative location within $\Hfrac$ (see Figure~\ref{fig:H-addresses}).

By the definition of $\Hfrac$, we can see that, for stage $j > 1$, the address of the top-left bottleneck tile is $d^{j-2}ad$ and its $y$-coordinate is $3^{j-1}+\left(3^{j-2}-1\right)/2 + 2\cdot 3^{j-2}$. However, the $y$-coordinate of the middle-left bottleneck tile is $3^{j-1}+\left(3^{j-2}-1\right)/2 + 3^{j-2}$.  Therefore, the vertical distance between those two is $3^{j-2}$, and a tile $1/3$ of the way down to the middle-left bottleneck tile has a $y$-coordinate $3^{j-2}/3 = 3^{j-3}$, which is less than that of the top-left bottleneck tile of $h_j$.

Given that the address of the top-left bottleneck tile is $d^{j-2}ad$, we know that it is located in the center (i.e. point $d$ in the generator) position of $j-2$ (appropriately-translated) sub-stages.  Its height above the bottom of the second-to-last of those initial $d$ positions is $\left(3^{j-3}-1\right)/2$, which is less than $3^{j-3}$. This means that a top-left-placed tile located $1/3$ of the way down toward the middle-left bottleneck tile must be located at a point that is lower than the bottom of the bounding box containing the sub-stage corresponding to the second-to-last of the initial $d$ positions. This also means that a sequence of tile placements to it must travel beyond (to the right of) the boundaries of that sub-stage. An example of this can be seen by the black line in Figure~\ref{fig:H5-top-left-bottleneck-zoomed}, which is outside of the blue box.

We are now in a position to create a new valid assembly sequence in $\mathcal{T}$ as follows.  Starting from the seed, run $\vec{\alpha}$ until the step at which it places the first bottleneck tile on the left side of $h_j$. Then, begin recording a sub-sequence of $\vec{\alpha}$ and denote this sub-sequence as $\vec{\alpha}'$.  As we run $\vec{\alpha}$ forward from this point, until it places the last tile of $h_j$, whenever a top-left-placed tile in $h_j$ is placed by $\vec{\alpha}$, we add that tile placement (type and location) to $\vec{\alpha}'$.  In this way, $\vec{\alpha}'$ becomes a sub-sequence of $\vec{\alpha}$ that records the growth of the top-placed sub-assembly -- and only the top-placed sub-assembly -- of the left-center of $h_j$.

\begin{figure}[htp]
\begin{center}
\includegraphics[width=2.0in]{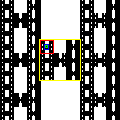}
\caption{Example showing the top-left bottleneck location of the 5-th stage of $\Hfrac$.  The address of this bottleneck tile is $d^3ad$, and the first sub-stage for which it is in location $d$ is surrounded by a green box, the second by a blue, the third by a red. Then, the sub-stage for which the location of that sub-stage is in location $a$ is surrounded by a yellow box.  This, then, is in position $d$ of stage 5.}
\label{fig:H5-top-left-bottleneck}
\end{center}
\end{figure}

\begin{figure}[htp]
\begin{center}
\includegraphics[width=4.0in]{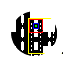}
\caption{Zoomed in portion of the example from Figure~\ref{fig:H5-top-left-bottleneck}.  The black line marks the distance $1/3$ of the way down toward the middle-left bottleneck.  Growth from the top-left bottleneck tile to any location adjacent to this line must exceed the bounding boxes of the first $2$ sub-stages, and must do so by first growing directly to the right of them.}
\label{fig:H5-top-left-bottleneck-zoomed}
\end{center}
\end{figure}

At this point, we are ready to show that $\Hfrac$ does not finitely self-assemble in $\mathcal{T}$. First, reset $\vec{\alpha}$ to the seed and begin its forward growth until the placement of the first bottleneck tile on the left side of $h_i$.  At this point, we will merge $\vec{\alpha}$ and $\vec{\alpha}'$ as follows.  For each tile position $\vec{p}$ in $\vec{\alpha}'$, we translate it so that the new position, $\vec{p}'$, is the point with the same relative offset from the top-left bottleneck position of $h_i$ as $\vec{p}$ was from the top-left bottleneck position of $h_j$. Namely, we have $\vec{p}' = \vec{p} - \left(3^{j-1}+\frac{3^{j-2}-1}{2},3^{j-1}+\frac{3^{j-2}-1}{2}\right) + 3^{j-2}(0,2) + \left(3^{i-1}+\frac{3^{i-2}-1}{2},3^{i-1}+\frac{3^{i-2}-1}{2}\right) + 3^{i-2}(0,2)$).  We continue to run $\vec{\alpha}$ forward by performing all tile placements up to, and including, the placement of the top-left bottleneck tile of $h_i$, with the exception of the middle or bottom-left bottleneck tiles or any tiles whose ancestors, relative to $\vec{\alpha}$, are either the middle or bottom-left bottleneck tiles. As soon as $\vec{\alpha}$ places the top-left bottleneck tile in $h_i$, we follow the tile placements of the modified $\vec{\alpha}'$. The result is a valid assembly sequence up to the point of the placement of at least one tile outside of $\Hfrac$. Thus, $\Hfrac$ does not finitely self-assemble in $\calT$.

\begin{figure}[htp]
\begin{center}
\includegraphics[width=3.0in]{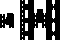}
\caption{Depiction of how top-placed growth from stage 5 would go out of bounds of $\Hfrac$ in stage 3 (left) and stage 4 (right).  The black tile is the top-left bottleneck tile, the green locations are those which correctly match the smaller stage, and the red are those which go out of bounds of $\Hfrac$. Clearly, all tiles in green positions will be able to grow, and then erroneous growth is forced to occur immediately to the right of the green tiles, where no other tiles could prevent this growth.}
\label{fig:H5-bad-growth-in-3-and-4}
\end{center}
\end{figure}

\paragraph {\bf Case 2:}  This case is symmetric to the previous case and is therefore omitted.

\paragraph {\bf Case 3:}  Here, middle-placed tiles of $h_j$ are placed by $\vec{\alpha}$ at points that are $2/3$ of the way up \emph{and} down toward the top-left \emph{and} bottom-left bottleneck tiles.
We focus on the upward growth and show that this will result in an erroneous tile placement.

First, we note that the address of the middle-left bottleneck tile of $h_j$ is $d^{j-2}bd$.  Then, we see that points within the left-center of stage $h_j$, which are vertical distance $2/3$ toward the top-left bottleneck tile, are beyond the top-most boundary of the bounding box around the initial $d^{j-2}$ sub-stages containing the middle-left bottleneck. Specifically, any sequence of tile placements by $\vec{\alpha}$ to such a point must pass through the top-right-most side of that bounding box.
We will now create an assembly sub-sequence $\vec{\alpha}''$ that records the tile placements of only the middle-placed tiles of $h_j$, similar to the definition of $\vec{\alpha}'$ in {\bf Case 1}. Then, we run $\vec{\alpha}$ forward, starting from the seed, performing all tile placements up to, and including, the placement of the middle-left bottleneck tile of $h_k$, with the exception of the top or bottom-left bottleneck tiles or any tiles whose ancestors, relative to $\vec{\alpha}$, are either the top-left or bottom-left bottleneck tiles. As soon as $\vec{\alpha}$ places the middle-left bottleneck tile in $h_k$, we follow the tile placements of the modified $\vec{\alpha}''$. The result is a valid assembly sequence.

\begin{figure}[htp]
\begin{center}
\includegraphics[width=2.0in]{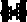}
\caption{A portion of the assembly of stage 6 of $\Hfrac$ depicting how middle-placed growth from stages 3, 4, or 5 would go out of bounds of $\Hfrac$ in stage 6.  The black tile is the middle-left bottleneck tile. The red tile shows a location that would grow out of bounds of $\Hfrac$ if the middle-placed tiles from stage 3 were allowed to grow in stage 6. The green tiles show the same for the middle-placed tiles from stage 4, and the blue show those from stage 5.}
\label{fig:H-middle-bad}
\end{center}
\end{figure}

It is worthy to note that the middle-left bottleneck of $h_k$ is located at address $d^{k-2}bd$. However, since $j < k$, the middle-left bottleneck tile of $h_k$ is nested deeper within location $d$ of (appropriately-translated) sub-stages of $\Hfrac$ than the middle-left bottleneck tile of stage $h_j$.  We now execute the sequence of attachments specified by $\vec{\alpha}''$, appropriately-translated from $h_j$ to $h_k$. By the difference of addresses of the respective middle-left bottleneck tiles of stages $h_j$ and $h_k$, $\vec{\alpha}''$ will place a tile (from $h_j$) up and out of the domain of $\Hfrac$ (within $h_k$). Thus, $\Hfrac$ does not finitely self-assemble in $\mathcal{T}$.
\qed

\end{proof}

\section{Details of the Impossibility of Finite Self-Assembly of the $\Ufrac$ fractal in the aTAM}\label{sec:u-appendix}

This section contains additional details for the proof of Theorem~\ref{thm:U-impossible}.

For notational purposes, we will refer to $u_i$ as the $i$th stage of $\Ufrac$.%
We call the \emph{center} tile of $u_i$ the single, unique central tile in the middle of the bottom row of the $i$th stage which connects the \emph{left} and \emph{right} sides of $u_i$.

\begin{definition}
Let $B^{\Ufrac}_0 = \{(0,0),(0,1),(0,2),(2,0),(2,1),(2,2)\}$.  For stages $i > 1$, we call the following set of 6 points the \emph{bottleneck} points of stage $i$, or $B^{\Ufrac}_i$:

$B^{\Ufrac}_i = \{(3^{i-1}+\frac{3^{i-2}-1}{2},0) + 3^{i-2}b | b \in B^{\Ufrac}_0 \}$.
\end{definition}

An example of the bottleneck points for a few stages of $\Ufrac$ can be seen in Figure~\ref{fig:U}.

\begin{proof}
We prove Theorem~\ref{thm:U-impossible} by contradiction.  Therefore, assume the converse, namely that there does exist an aTAM system which finitely self-assembles $\Ufrac$.  Let $\calT = (T,\sigma,\tau)$ be such a system.  We will now show that $\calT$ cannot correctly finitely self-assemble $\Ufrac$.  Without loss of generality, we will assume that $|\sigma| = 1$, i.e. that $\calT$ is singly-seeded.  Our proof, however, holds for any finite sized seed with trivial adjustments of a few constants. Since the location of $\sigma$ must be within $\Ufrac$, let $s$ be the stage number of the smallest stage of $\Ufrac$ which contains $\sigma$.

We define constant $c \in \mathbb{N}$ such that $c = 6(|T|^6)$.  Now, assuming that $\calT$ correctly finitely self-assembles $\Ufrac$, then every producible assembly in $\calT$ has domain contained in $\Ufrac$.  Therefore, we will now allow assembly to begin from the seed, $\sigma$, and proceed until the final tile of stage $h_{c+s+2}$ attaches (i.e. we allow it to grow $c$ stages past the smallest stage containing the seed plus 2).  Throughout the assembly process, we bias the assembly sequence so that whenever multiple locations could receive tiles during any given step, it always chooses to add a tile in a location of the smallest possible stage.  Note that as it still only chooses from valid tile attachments at every step, this results in a valid assembly sequence of $\calT$, which we record and refer to as $\vec{\alpha}$.

We now inspect $\vec{\alpha}$ and sort each stage $u_i$ for $2 < i \leq c$ into groups of sets.  By the definition of $c$, since each stage has $6$ bottleneck locations which each may have any of $|T|$ tile types (note that since $h_{c+s+2}$ has completed at this point, so have all its substages by definition of $\Ufrac$) for a total of $|T|^6$ ways to tile the bottleneck locations of a stage, it must be the case that there is a set of at least $6$ stages which have the exact same tile types in each of their respective bottleneck locations.  We refer to this set of stages as $S$.

We now inspect $\vec{\alpha}$ and $S$ and sort each stage $u_i \in S$ into set $S_L$ or $S_R$ based on whether or not, in $\vec{\alpha}$, a tile was first placed immediately to the left of $u_i$'s center tile (putting it in $S_L$), or a tile was first placed immediately right of $u_i$'s center tile (putting it in $S_R$).  Because the center tile of $u_i$ must eventually be placed, at least one of those locations must first receive a tile since a tile placed at at least one of those locations must be a parent of the center tile.   Note that since $|S| = 6$ and there are only two choices, either $|S_L|$ or $|S_R|$ must be at least $3$.  Without loss of generality, we'll assume $S_L$ is at least this large, but the rest of the argument is identical but symmetric if only $S_R$ is this large.

We will refer to the portion of stage $u_i$ which lies to the east of $u_i$'s left side bottleneck tiles, but west of its center tile, as the \emph{left-middle} of $u_i$.  It must be the case that in $\vec{\alpha}$, during the growth of the left-middle of $u_i$, each of the tiles there had as an ancestor at least one of the bottleneck tiles.  For each tile in the left-middle of $u_i$, we will call it \emph{top-placed} iff it has as an ancestor a single bottleneck tile of stage $u_i$ and that bottleneck tile is the top (left) bottleneck tile.  Analogously, we denote \emph{middle-placed} and \emph{bottom-placed} tiles if their sole bottleneck tile ancestors are the middle and bottom (left) bottleneck tiles, respectively.  Note that some tiles may fall into none of these categories, as they may have multiple bottleneck tiles as ancestors, but that all tiles in the left-middle of stage $u_i$ must have at least one of the left bottleneck tiles of stage $u_i$ as ancestors.  This is because the only paths of growth from the seed to tiles in those locations are through the bottleneck tiles because we know that in $u_i$, the tile immediately to the left of the center tile is placed before the tile immediately to its right, meaning that growth could not have proceeded from right to left through the center tile, and thus the only possible paths of growth into the left-middle of $u_i$ are through the left bottleneck tiles.  We now point out the fact that in order for there to be any tiles of the left-middle of $u_i$ not in those categories, they must have as ancestors multiple unique left bottleneck tiles of $u_i$, and this is only possible if some tile in their ancestry (or they themselves) were placed via cooperation between two or more tiles which have as ancestors each of those bottleneck tiles (or were themselves the bottleneck tiles).  Furthermore, this is only possible if tiles with unique bottleneck tiles in their ancestry grew to locations which were no further than one position separated, which is the only way a mixed-ancestry tile could be placed.  (A tile could also bind between a bottleneck tile itself and another whose ancestor is some other bottleneck tile, but the same argument holds.)  This in turn means that all tiles in the left-middle of $u_i$ are top-placed, left-placed, or bottom-placed, or tiles which were in two or more of those categories grew to locations within no more than one unit square away from each other.  The key point from this observation is that assemblies of tiles which are in those categories, meaning that in $u_i$ they could be grown simply from their associated bottleneck tiles without cooperation from any other tiles, either completely fill the space of the left-middle of $u_i$, or tiles of those categories grow vertically together to within at least a single space of each other.

\begin{figure}[htp]
\begin{center}
\includegraphics[width=2.0in]{images/U-addresses}
\caption{(left) Address labels of each point in the generator of $\Ufrac$, (right) The black location is contained within stage three, and its address is $dab$ (i.e. it is location $d$ in a stage one copy (outlined in red), within location $a$ of a stage two copy (outlined in green), within location $b$ of stage three.)}
\label{fig:U-addresses}
\end{center}
\end{figure}

We now inspect the different possibilities related to how tiles grow throughout the left-middle of the stages contained in $S_L$.   These possibilities lead to several cases, at least one of which must occur, but we will show that any of them would violate the fact that $\calT$ finitely self-assembles $\Ufrac$.  Throughout the following, let $i < j < k$ be the indices of the three stages in $S_L$ (or, if there are more stages in $S_L$, any arbitrary subset of three of them such that $i < j < k$).  Note that we assign addresses to the locations in $\Ufrac$ as shown in Figure~\ref{fig:U-addresses}.

\begin{figure}[htp]
\begin{center}
\includegraphics[width=2.0in]{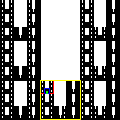}
\caption{Example showing the top left bottleneck location of the 5th stage of $\Ufrac$.  The address of this bottleneck tile is $d^3ad$, and the first substage for which it is in location $d$ is surrounded by a green box, the second by a blue, the third by a red, then the substage for which the location of that substage is in location $a$ is surrounded by a yellow box.  This, then, is in position $d$ of stage $5$.}
\label{fig:U5-top-left-bottleneck}
\end{center}
\end{figure}

\paragraph {\bf Case1: }   By observation of the shape of $\Ufrac$, we can see that for stage $j > 1$, the address of the top left bottleneck tile is $d^{j-2}ad$ (see Figure~\ref{fig:U5-top-left-bottleneck}).  Assume that some top-placed tile of the left-middle of $u_j$ attaches in a location which is east of the level $j-3$ sub-stage in which it is contained.  This means that a path of tiles can grow solely from the top left bottleneck to such a location.  But, since a tile of the same type is also located in the top left bottleneck of stage $i < j$, this growth could also occur in stage $i$ and would go outside of the boundaries of $\Ufrac$.  (The technique for showing how this must be possible is the same as used in the proof of Theorem~\ref{thm:H-impossible}, and an example can be seen in Figure~\ref{fig:U5-bad-growth-in-3-and-4}.)
Thus, it is impossible for a top-placed tile of $u_j$ to grow that far to the east.  By inspecting the shape of $\Ufrac$, we see that the only connections to top-placed tiles that the substage containing the top left bottleneck has between it and the portion of the left-middle of $u_j$ below is through the south side of the level $j-2$ substage in which it is located (e.g. if $j=6$, then the address is $ddddad$, and the only connections to the south are through the level $4$ substage).  This means that in stage $u_j$, in order for a top-placed tile to attach in a position which is lower than the top-left bottleneck tile, it must be part of a path of tiles which have grown to the east outside of the level $j-3$ substage, which we showed is impossible.  Thus, we know that in stage $u_j$, no top-placed tile appears in a location lower than the top left bottleneck.

\begin{figure}[htp]
\begin{center}
\includegraphics[width=4.0in]{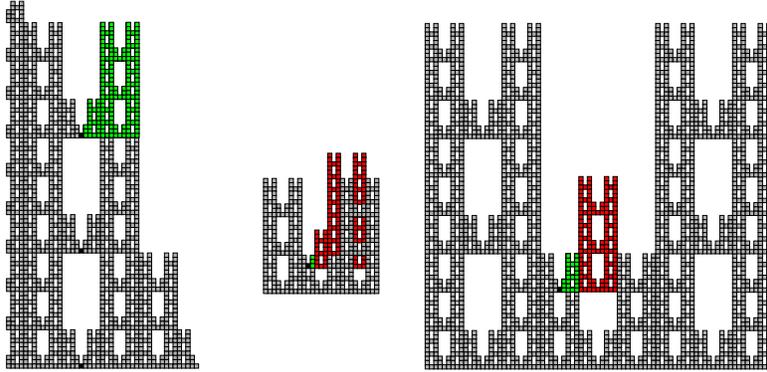}
\caption{Depiction of how top-placed growth from stage 5 would go out of bounds of $\Ufrac$ in stage 3 and stage 4.  (left) A portion of stage 5 showing the 3 bottleneck tiles in black, and possible horizontal and vertical growth from the top bottleneck tile.  (middle and right)  Stages 3 and 4.  The black tile is the top left bottleneck tile, the green locations are those which correctly match the smaller stage, and the red are those which go out of bounds of $\Ufrac$. Clearly, all tiles in green positions will be able to grow, and then erroneous growth is forced to occur immediately east of the green tiles, where no other tiles could prevent this growth. (Note that only a single tile needs to be placed in a red location to break the shape of $\Ufrac$.)}
\label{fig:U5-bad-growth-in-3-and-4}
\end{center}
\end{figure}

\paragraph {\bf Case2: } By observation of the shape of $\Ufrac$, we can see that for stage $j > 1$, the address of the middle left bottleneck tile is $d^{j-2}bd$.  Assume that some middle-placed tile of the left-middle of $u_j$ attaches in a location which is east of the level $j-3$ sub-stage in which it is contained.  Due to the similar geometric surroundings of the middle left bottleneck, using the same reasoning as for Case1, we know that it is impossible for a middle-placed tile to attach in such a location.  Also similar to Case1, we use this fact to imply that no middle-placed tile can attach in a location lower than the middle left bottleneck.  However, we also use it to imply that no middle-placed tile can attach in a location which is at a $y$-coordinate which is only $2$ lower than the top left bottleneck.  This is obviously true by inspection of $\Ufrac$, since the only way that middle-placed tiles could grow so far north would also be to grow outside of (to the east of) the level $j-3$ sub-stage.  (By translating the colored portions down to the middle bottleneck tile, Figure~\ref{fig:U5-bad-growth-in-3-and-4} can also depict how this would occur.)

\paragraph {\bf Case3: } Assume a bottom-placed tile of stage $u_j$ appears in a location with the same $y$-coordinate as the middle-left bottleneck.  Similar to Case3 of the proof for Theorem~\ref{thm:H-impossible}, we will sketch how such growth in stage $j$ actually grows out of bounds of $\Ufrac$ when it occurs in stage $k > j$.  Essentially, since the address of the bottom left bottleneck in stage $j$ is $d^{j-2}cd$ and inspection of $\Ufrac$ shows that the only paths in the left-middle of $u_j$ which would allow bottom-placed tiles to attach the necessary distance northward would rely on growth up through the north boundary of the level $j-2$ substage containing that bottleneck tile, when that growth occurs in stage $k > j$, then it grows out the north side of one of the sub-stages smaller than the level $k-2$ sub-stage, which is the only one that has neighboring locations to the north that are still contained within $\Ufrac$.  (See Figure~\ref{fig:U-bottom-bad} for an example.)  Therefore, bottom-placed tiles in $u_j$ cannot grow to the same vertical height as the middle left bottleneck.

\begin{figure}[htp]
\begin{center}
\includegraphics[width=4.0in]{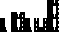}
\caption{(left) A portion of $u_4$ showing the 3 left bottleneck tiles (black) and the possible upward growth (green) from the bottom left bottleneck tile which could grow to positions horizontal to the middle left bottleneck, (middle) A portion of the assembly of $u_5$ depicting similar growth (yellow), (right) A portion of the assembly of $u_6$ showing how that bottom-placed growth from stages 4 or 5 would go out of bounds of $\Ufrac$ in stage 6.  The black tile is the bottom left bottleneck tile. The green tiles show locations which would grow out of bounds of $\Ufrac$ if the middle-placed tiles from stage $u_4$ were allowed to grow in stage $u_6$, and the yellow for the growth of stage $u_5$.}
\label{fig:U-bottom-bad}
\end{center}
\end{figure}

\paragraph {\bf Case4: }  We'll refer to the region of the left-middle of stage $j$ which falls vertically between the top left bottleneck and the middle left bottleneck, but to the east of the level $j-3$ substage containing the middle left bottleneck, as the \emph{unfilled-region}.  From Case1 we know that tiles cannot grow down solely from the top bottleneck into this region.  From Case2 we know that tiles cannot grow into this region solely from the middle bottleneck.  From Case3 we know they can't come solely from the bottom bottleneck.  Since $u_j \in S_L$, we know that the unfilled-region also didn't receive tile growth through the center position, so the only option left is for tiles to be placed there via cooperation between tiles whose ancestors were different bottleneck tiles.  Such cooperative growth clearly cannot occur between descendants of the top and middle bottleneck tiles, because it was also shown in Case2 that middle-placed tiles cannot grow to within a distance of $2$ of the vertical position of the top bottleneck, meaning there must be at least a two-tile-wide gap, preventing cooperative tile attachment.  Thus, the only remaining option is for cooperative tile attachments between middle-placed and bottom-placed tiles.  We now note that Case2 also showed that middle-placed tiles cannot grow eastward beyond the boundary of the level $j-3$ substage containing the middle bottleneck tile, and that Case3 showed that bottom-placed tiles can grow to a vertical position of at most one less than that of the middle bottleneck.  (See Figure~\ref{fig:U-bad-cooperation} for a visual depiction.)

\begin{figure}[htp]
\begin{center}
\includegraphics[width=3.0in]{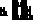}
\caption{(left) A portion of $u_4$ showing the middle and bottom bottleneck tiles (black), in green is the right side of the level $4-3=1$ substage containing the middle bottleneck which is the farthest east that middle-placed tiles can grow, and the possible bottom-placed tile growth reaching the maximum height.  The red location shows the only location where such tiles could cooperate to place another tile. (right) The same scenario is depicted for stage $u_5$.}
\label{fig:U-bad-cooperation}
\end{center}
\end{figure}

In order for a tile to be placed cooperatively between a middle-placed tile and a bottom-placed tile in $u_j$, a tile from each has to be adjacent to the same location.  There is exactly one such position, which is directly east of the middle bottleneck tile and the first which is outside of the level $j-2$ substage containing it.  (This position is above that of a possibly bottom-placed tile.) For this to occur, a middle-placed tile must reach the farthest east location of the level $j-3$ substage containing the middle bottleneck tile.  However, if we allow that same growth to occur in stage $i < j$, then it exceeds the eastern edge of the level $i-3$ substage containing the middle bottleneck tile of $u_i$, which we know cannot happen by Case2.  Thus, this is a contradiction, meaning that Case4 is also impossible.

Since none of the growth in cases $1-4$ are possible, it is impossible for the middle-left of $u_j$ to be fully tiled, and thus $\calT$ does not finitely self-assemble $\Ufrac$.

\qed

\end{proof} 

\section{Details of Finitely Self-assembling $\bf{U}$ in the 2HAM}\label{sec:construction-appendix}

In this section we give the more detail about the 2HAM system which finitely self-assembles $\bf{U}$. We start by describing the self-assembly of stage-$n$ ladder supertiles for $n\geq 3$.

\subsection{Stage-$n$ ladder supertiles for $n\geq 3$}\label{sec:stage-n-appendix}

\begin{figure}[htp!]
\centering
	   \includegraphics[width=3.5in]{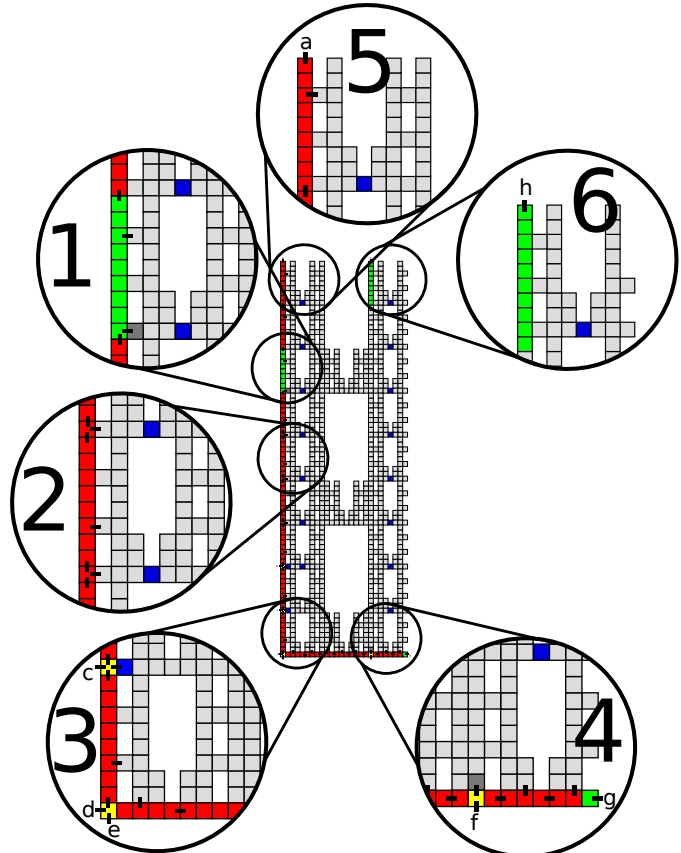}
   \caption{A schematic depiction of a grouted stage-$3$ supertile. Note the similarity between the pattern of glues labeled here and the glues of the second figure from the left in Figure~\ref{fig:grout-sequence1}. Many of the glues not depicted here are strength-$2$ glues which are hard-coded to allow either grout supertiles to self-assemble, stage-$2$ ladder supertiles to self-assemble, or rung supertiles to self-assemble. Glues depicted as strength-$1$ glues are intended to indicate how grout supertiles cooperatively bind. Glue labels shown here are for reference purposes only and do not correspond to the label in the definition of the tile set for $\calT_{\bf{U}}$.} \label{fig:grout-sequence}
\end{figure}

\begin{figure}[htp!]
\centering
	   \includegraphics[width=5in]{}
   \caption{A schematic depiction of a grouted stage-$3$ ladder supertile and a stage-$3$ rung supertiles before (left) and after (right) they bind. Note the similarity between the pattern of glue labels here and the glue labels in Figure~\ref{fig:stage3rung}. Glue labels shown here are for reference purposes only and do not correspond to the label in the definition of the tile set for $\calT_{\bf{U}}$.} \label{fig:grout-sequence23}
\end{figure}

In Section~\ref{sec:grout-supertiles} we saw that tile types can be defined to self-assemble base supertiles and grout supertiles such that there is an assembly sequences where these supertiles bind to yield stage-$3$ ladder supertiles. Moreover, the stage-$3$ ladder supertiles which self-assemble contain tiles with edges that contain indicating glues that are identical to the indicating glues to one of the stage-$2$ ladder supertile types, giving $10$ types of stage-$3$ ladder supertiles.
For $n\geq 3$, we note that copies of the same grout supertiles which bind to stage-$2$ ladder and rung supertiles can bind to stage-$(n-1)$ ladder supertiles, yielding grouted stage-$(n-1)$ supertiles such that appropriate grouted stage-$(n-1)$ supertiles can bind to yield a stage-$n$ ladder supertile. Moreover,  the stage-$n$ ladder supertiles which self-assemble contain tiles with edges that contain indicating glues that are identical to the indicating glues to one of the stage-$(n-1)$ ladder supertile types, and thus identical to indicating glues of one of the stage-$2$ ladder supertiles. In this section, we will describe this in more detail by describing how grout supertiles bind to a stage-$3$ ladder supertile with type $r[ef]b$.

Let $Q$ be a stage-$3$ ladder supertile with type $r[ef]b$ and let $L$ once again denote a stage-$2$ ladder supertile with type $r[ef]b$. $L$ is depicted as the second supertile from the left in Figure~\ref{fig:grout-sequence1}. We denote the supertile that is the result of grout binding to $L$ until no more grout supertiles can bind the $L'$.  We refer to the glue labels for the glues shown Figure~\ref{fig:grout-sequence}. The glues labeled $a$ through $h$ are identical and serve the same purpose as the glues labeled $a$ through $h$ belonging to edges of tiles in $L$.
Referring to the enumerated subfigures of the supertiles in Figure~\ref{fig:grout-sequence} we see in Subfigure 1, a grout supertile initially binding to $L$. Then, grout supertiles cooperatively bind one at a time to self-assemble a single tile wide column of tiles to the north and south of the grout supertile that initially binds. Subfigure 2 depicts a grout supertile cooperative binding to the south of a grout supertile that is bound to the south of the grout supertile that initially bound to $L$. Note the number of tile locations between glues $a$ and $h$, and between $e$ and $f$. Also note that Subfigure 6 depicts a helper glue, labeled $h$, which belongs to an edge of a tile of a grout supertile that must bind to a stage-$2$ ladder supertile with type $r[ef]a$ as described in the previous section. Moreover, note that in Subfigure 3 an indicator glue belonging to the west edge of the blue tile allowed for the yellow tile to bind and expose the $c$ glue. This is the reason that the indicating glues where added to the one of the versions of stage-$2$ ladder supertiles with type $r[ab]c$ and $r[ef]c$ (shown as the third figure from the left in Figure~\ref{fig:grout-sequence1}). This is depicted in more detail in Figure~\ref{fig:grout-sequence23}.

Thus far, we have described the self-assembly of stage-$n$ ladder supertiles via particular assembly sequences, ignoring many other possible assembly sequences for $\calT_{\bf{U}}$ and many possible producible supertiles. In the next section, we describe how our construction ensures finite self-assembly of $\bf{U}$ despite these many possible assembly sequence and producible supertiles.

\subsection{Proper self-assembly despite nondeterminism}\label{sec:proper-assembly-appendix}

In this section, we describe how our construction avoids race conditions that if not avoided, could lead to the self-assembly of shapes that are not contained in $\bf{U}$. See Figure~\ref{fig:spurious-growth} for an example of such a race condition.
\begin{figure}[htp]
\begin{center}
\includegraphics[width=3.0in]{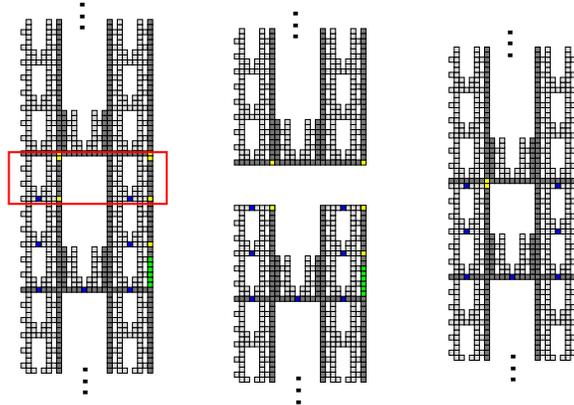}
\caption{Left: two grouted stage-$3$ ladder supertiles correctly bind via two strength-$1$ glues exposed on the north edges of the second to northernmost yellow tiles. The same glues that permit this binding may also be on north edges of tiles belonging to grout supertiles that have previously bound. For example, assume that these same glues are on north edges of the two southernmost yellow tiles. Middle: assume that the supertiles depicted here are producible. Right: erroneous binding occurs as the north glues of the southernmost yellow tiles match the south glues of the northern most yellow tiles. We show that our construction avoids such erroneous binding.}
\label{fig:spurious-growth}
\end{center}
\end{figure}
To see how our construction avoids the race condition described in Figure~\ref{fig:spurious-growth}, consider Subfigures~5 and~6 of Figure~\ref{fig:grout-sequence}. Note that in virtue of how stage-$2$ ladder supertiles self-assemble (as shown in Figure~\ref{fig:sequence-stage2ladder}) and how stage-$n$ supertiles self-assemble as a left half-ladder and right half-ladder bind, the blue tiles depicted in Subfigures~5 and~6 must bind prior to the grout supertiles with tiles that expose the $a$ and $h$ glues. Therefore, the race condition depicted in Figure~\ref{fig:spurious-growth} is avoided as the blue tile prevent the binding of $a$ and $h$ glues except for when such glues belong to the northernmost tile of a grouted stage-$n$ ladder supertile.

\begin{figure}[htp!]
\centering
	   \includegraphics[width=1.5in]{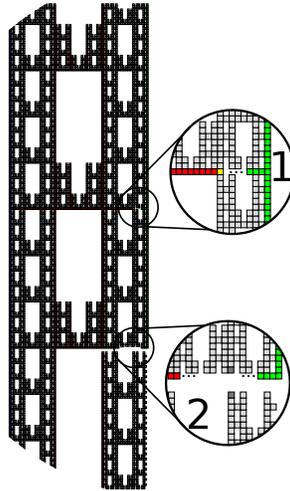}
   \caption{A depiction of grout supertiles that ``turn a corner'' to bind to south glues of tiles of a some stage-$n$ ladder supertile. Note that grout supertiles may turn a corner many times before actually binding to south glues of tiles that are southern most tiles of the stage-$n$ ladder supertile.} \label{fig:corner-turning}
\end{figure}

Subfigure 3 in Figure~\ref{fig:grout-sequence} depicts grout that initially binds to glues of south edges of tiles belonging to a stage-$3$ ladder supertile. When such binding occurs, we say that the grout supertiles ``turn a corner''. One can note that for $n\geq 4$, if grout supertiles bind to an incomplete stage-$n$ ladder supertile, it is possible for grout supertiles to turn a corner before actually reaching southernmost tiles of a stage-$n$ ladder supertile. This situation is depicted in Figure~\ref{fig:corner-turning}. In Subfigures~1 and~2, grout will turn a corner and the easternmost tile of the grout supertile that binds to a south glue of a stage-$n$ ladder supertile contains a south glue that allows a grout supertile to cooperatively attach and continue to bind to east glues of easternmost tiles of the stage-$n$ ladder supertile. In other words, grout supertiles are defined so that they are permitted to turn a corner and still continue to self-assemble a column of tiles along the side of some stage-$n$ ladder supertile. Subfigures~1 and~2 depict this situation. Subfigure~2 also depicts the special case where grout turns a corner and may bind to an indicating glue of a previous stage ladder supertile. We define grout supertiles so that in this case, appropriate binding glues will be exposed by the grout supertile that attaches to the indicating glue. It is important to note that the binding glue that is exposed on an edge of a tile belonging to the grout supertile that binds does not depend on the type of grout supertile that binds to it. Finally, in Subfigure~2, we note that we must ensure that the east edge of the easter most red tile does not contain an indicating glue as these grout supertiles are blocked by the green grout supertiles that turn a corner. To do this, we note that we can define tile types for grout supertiles so that grout supertiles can only turn a corner after such grout binds to all indicating glues (there are at most $3$ indicating glues on the left or right side of any type of stage-$n$ ladder supertile.)

\subsection{Finite self-assembly of the first quadrant}\label{sec:quadrant-appendix}

The system that has been described self-assembles higher and higher stages of the ladder supertiles. Note that $\bf{U}$, by definition, only contains points in the first quadrant of the plan. Moreover, the westernmost points (resp. southernmost points) are a vertical (resp. horizontal) line of points. We call these points the ``boundary'' of $\bf{U}$. Only self-assembling higher and higher stages of ladder supertiles would give a system that finitely self-assembles $\bf{U}$ without points on the boundary of $\bf{U}$.  In this section, we give a simple tweak the constructed 2HAM system in order to ensure that there is an assembly sequence from any producible assembly sequence to a terminal assembly with domain equal to $\bf{U}$ (including boundary points). Just as there are two versions of stage-$2$ ladder supertiles with type $r[ab]c$, this tweak involves adding one additional version of the stage-$2$ ladder supertile with type $r[ab]c$. For clarity, we say that this new version of a type of stage-$2$ ladder supertile has ``boundary type''. Figure~\ref{fig:special-base-case} depicts this ladder supertile. In addition, we add tile types which self-assemble one more type of grout supertiles. We also say that these grout supertiles have ``boundary type''. In total, there are now $11$ types of stage-$2$ ladder supertiles and $11$ types of grout.

\begin{figure}[htp!]
\centering
	   \includegraphics[width=.5in]{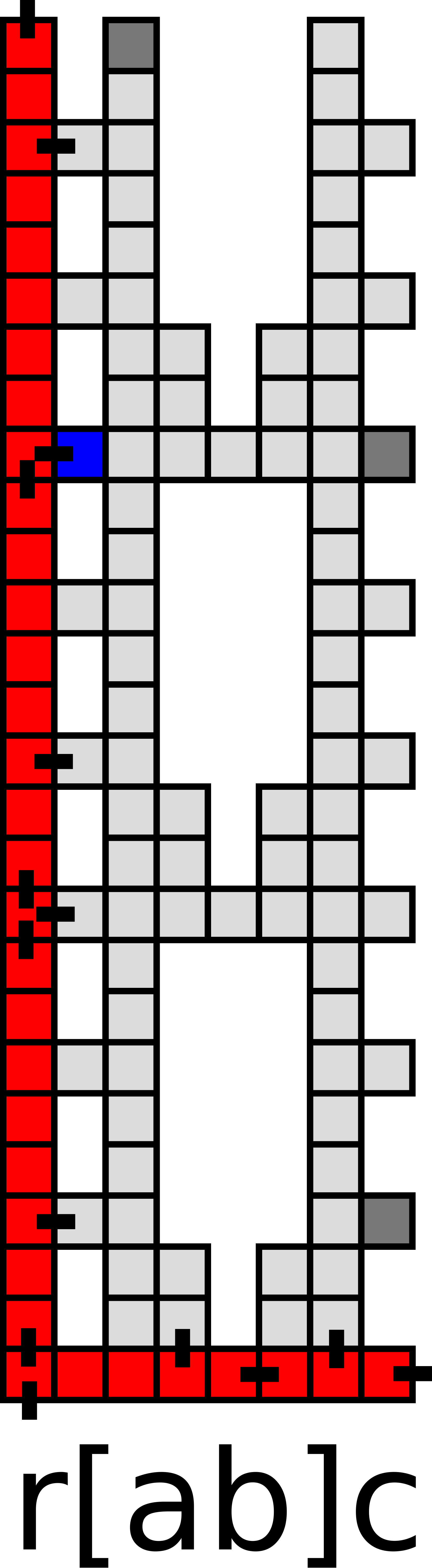}
   \caption{A depiction of a version of a stage-$2$ ladder supertile with type $r[ab]c$. The tiles of this stage-$2$ ladder supertile are shown in gray, dark gray, and blue. ``Boundary'' supertiles are shown as red tiles. } \label{fig:special-base-case}
\end{figure}

In addition, we add tile types which self-assemble supertiles that are similar to grout supertiles in that they cooperatively bind to stage-$n$ ladder supertiles with boundary type to ``partially surround'' the supertile  (by binding to glues on west or south edges of a ladder supertile). We call these supertiles \emph{boundary} supertiles. In Figure~\ref{fig:special-base-case}, the blue tile contains an indicating glue which allows a boundary supertile to initially bind. These grout supertiles bind to the stage-$2$ ladder supertile as shown and do not contain tiles with edges that contain binding glues. Moreover,  the indicating glues of a stage-$2$ ladder supertile with boundary type are defined to only permit the grouted stage-$2$ ladder supertile to have grout with boundary type. Note that this implies that if a stage-$3$ ladder supertile contain a stage-$2$ ladder supertile with boundary type, then the stage-$3$ ladder supertile must have boundary type. Higher stages are analogous to the self-assembly of stage-$3$ ladder supertiles. By adding this new type of ladder supertile and grout supertiles, we ensure that for any producible assembly in our system, there is an assembly sequence to a terminal assembly with domain equal to $\bf{U}$ (up to translation).

\subsection{U-fractal construction proof of correctness}\label{sec:ufractal-poc-appendix}

To prove that the 2HAM TAS $(T_{\bf{U}},2)$ given by our constuction finitely self-assembles $\bf{U}$ we make the following observations.

\begin{observation}\label{obs:ufractal-poc}
For any producible assembly $\alpha \in \prodasm{T_{\bf{U}}}$, there exists $n \in \N$  such that $\alpha$ is a subassembly of a stage-$n$ ladder supertile, $L$ say, and there is an assembly sequence starting from $\alpha$ with result $L$.
\end{observation}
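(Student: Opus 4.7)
The plan is to prove Observation~\ref{obs:ufractal-poc} by strong induction on the number of tiles in $\alpha$. The base case consists of singleton tiles and proper subassemblies of a base supertile (stage-$2$ ladder or rung) or of a grout supertile. Since each such supertile is hard-coded by strength-$2$ glues (see Figure~\ref{fig:sequence-stage2ladder} and the description in Section~\ref{sec:grout-supertiles}), every subassembly of it admits an assembly sequence that completes it to the full supertile. A completed base supertile is itself a subassembly of a grouted stage-$2$ ladder (attachable via the generic glues $n,s,e,w$), which in turn is a subassembly of a stage-$3$ ladder supertile $L$, completable via the hierarchical assembly sequence given in Section~\ref{sec:ufractal}. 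A completed grout supertile, by construction, only binds to some stage-$n$ ladder; I would pick any completable such $L$ explicitly.

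For the inductive step, suppose $\alpha$ arises as a $\tau$-stable combination of producible assemblies $\alpha_1$ and $\alpha_2$. By the inductive hypothesis, each $\alpha_i$ is a subassembly of some stage-$n_i$ ladder supertile $L_i$, together with some possibly-partial grouting. I would case-split on the type of binding that produced $\alpha$: (i) internal strength-$2$ glues of a single base or grout supertile (reducing to the base case), (ii) indicating glues that attach grout to a ladder, or (iii) strength-$1$ binding (yellow) glues that combine grouted ladders and rungs into a larger ladder (as in Figures~\ref{fig:sequence} and~\ref{fig:grout-sequence-main}). In each case the grout-type encoding carried by the indicating/binding glues (see the helper-glue discussion around Figure~\ref{fig:grout-sequence1}) forces $\alpha_1$ and $\alpha_2$ to be compatible pieces of a single stage-$n$ ladder supertile $L$ with $n = \max(n_1, n_2) + O(1)$; the completing assembly sequence for $L$ is then obtained by extending each $\alpha_i$ to $L_i$ and then executing the hierarchical pattern of Section~\ref{sec:stage-n}.

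The main obstacle will be ruling out the race-condition scenarios discussed in Section~\ref{sec:proper-assembly-appendix}, where partial grouting of $\alpha$ could in principle expose a glue pattern mismatching every completable stage-$n$ ladder. I would argue, following Figure~\ref{fig:grout-sequence} and the blue-tile ordering argument from Section~\ref{sec:proper-assembly-appendix}, that the blue tiles internal to each stage-$2$ ladder and the order imposed by cooperative attachment force the grout glues labeled $a$ and $h$ to match only in their intended positions along the northernmost/southernmost rows of a grouted ladder; hence no spurious commitment is possible. A secondary subtlety is that $\alpha_1$ and $\alpha_2$ may have already committed to different ladder-type patterns via the indicating glues exposed; here I would appeal to the boundary-type ladder and grout supertiles introduced in Section~\ref{sec:quadrant-appendix}, which always provide a compatible enclosing stage-$n$ ladder since they can absorb any partial grouting pattern along the western and southern boundary.

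Finally, I would observe that the induction terminates because at each stage the binding glues exposed on the current $\alpha$ uniquely determine the minimal stage-$n$ ladder containing $\alpha$ (modulo the choice of grout type), and Section~\ref{sec:stage-n} exhibits an explicit assembly sequence filling in the missing tiles of that ladder. This gives the desired $n$ and the desired assembly sequence from $\alpha$ to $L$, completing the induction.
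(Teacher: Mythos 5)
The paper never actually writes down a proof of this observation: it is asserted in Section~\ref{sec:ufractal-poc-appendix} as a consequence of the construction, with the supporting reasoning scattered through Sections~\ref{sec:grout-supertiles}, \ref{sec:stage-n-appendix} and \ref{sec:proper-assembly-appendix}. Your induction on the number of tiles (equivalently, on the two-handed combination tree producing $\alpha$) is the natural way to make that implicit argument explicit, and the ingredients you invoke --- completion of hard-coded base and grout supertiles via their strength-$2$ glues, the grout-type encoding on the binding glues, and the blue-tile ordering that defuses the race condition of Figure~\ref{fig:spurious-growth} --- are exactly the ones the paper relies on. So in spirit this is the same argument, just actually carried out rather than asserted.

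One step is wrong as stated, however. To handle the case where $\alpha_1$ and $\alpha_2$ have already committed to different ladder types via their exposed indicating glues, you appeal to the boundary-type supertiles of Section~\ref{sec:quadrant-appendix}, claiming they can ``absorb any partial grouting pattern.'' That is not their role: boundary supertiles exist solely so that the westernmost and southernmost points of $\bf{U}$ eventually get tiled; they constitute an eleventh grout/ladder type and are no more forgiving of type mismatches than any other. The correct resolution of your worry is simpler and is already latent in your case (iii): because the binding (yellow) glues and the helper glues encode the grout type, two grouted pieces whose committed types disagree expose non-matching glues and therefore cannot form a $\tau$-stable combination at all, so the problematic case never arises --- it should be eliminated, not repaired. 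Relatedly, your closing claim that the exposed glues ``uniquely determine the minimal stage-$n$ ladder containing $\alpha$'' is stronger than needed and not quite true (a completed stage-$n$ ladder is also a subassembly of every higher-stage ladder); the observation only requires existence of some enclosing $L$. Finally, note a small mismatch in the statement itself that you inherit rather than create: an assembly containing boundary supertiles is, strictly speaking, a subassembly of a boundary-type ladder rather than of a stage-$n$ ladder as defined (which excludes the boundary points); the paper glosses over this in the same way in Observation~\ref{obs:ufractal-poc2}.
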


\begin{observation}\label{obs:ufractal-poc2}
For $n\in \N$, any producible stage-$n$ ladder supertile $\alpha \in \prodasm{T_{\bf{U}}}$, there exists an assembly sequence starting from $\alpha$ that results in a stage-$(n+1)$ ladder supertile (including a ladder supertile with boundary type).
\end{observation}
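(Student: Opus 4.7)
The plan is to show, for a given producible stage-$n$ ladder supertile $\alpha \in \prodasm{T_{\bf{U}}}$, that we can construct an explicit assembly sequence realizing the hierarchical composition described in Section~\ref{sec:stage-n-appendix}. Fix a target type of stage-$(n+1)$ ladder supertile (any of the 11 types, including boundary type), and let $\gamma$ denote the grout type that yields this target stage-$(n+1)$ type. The overall strategy is to grow grout of type $\gamma$ around $\alpha$, independently produce the remaining grouted stage-$n$ ladder supertiles and rung supertiles required by the target, and then bind them together in the order exemplified in Figure~\ref{fig:sequence}.

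First I would observe that grout supertiles of type $\gamma$ self-assemble independently of $\alpha$ from their singleton tile components, because each grout supertile is hard-coded via strength-$2$ glues whose constituent tiles are always available in the initial state. These grout supertiles can then be attached to $\alpha$ one at a time, following the cooperative binding pattern illustrated in Figures~\ref{fig:grouted-stage2} and~\ref{fig:grout-sequence-main}, until no further grout of type $\gamma$ can attach. Since the indicating glues of $\alpha$ are compatible with grout of type $\gamma$ (the indicating-glue placement of a stage-$n$ ladder is determined solely by its type, which matches one of the 11 stage-$2$ types), this stage of the construction yields a grouted stage-$n$ ladder supertile of type $\gamma$ that has $\alpha$ as a subassembly.

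Next, within the same assembly sequence, I would construct the other grouted stage-$n$ ladder supertiles (and, when $n = 2$, the left and right rung supertiles, or when $n \geq 3$, the higher-stage analogues that serve the role of rungs) whose types are dictated by the target stage-$(n+1)$ type. Their existence is established by induction on $n$: the base case $n = 2$ is given by the hard-coded assembly sequences of Section~\ref{sec:base-supertiles}, and the inductive step invokes the observation being proved, applied at stage $n - 1$ to every producible stage-$(n-1)$ ladder supertile (which exists by Observation~\ref{obs:ufractal-poc}). One then combines these grouted supertiles according to the assembly sequence exemplified in Figure~\ref{fig:sequence} to obtain a stage-$(n+1)$ ladder supertile of the chosen type.

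The main obstacle is verifying that no unwanted attachment occurs during this sequence. In particular, the race condition depicted in Figure~\ref{fig:spurious-growth}, where matching helper glues on intermediate tiles of different grouted supertiles could permit binding at wrong locations, must be excluded. Here I would appeal to the analysis in Section~\ref{sec:proper-assembly-appendix}: the indicator tiles from the underlying stage-$2$ substructure are placed before the relevant grout tiles and physically block the erroneous glue interactions, and the corner-turning mechanism together with the grout-type encoding carried by the $a$, $h$, and $i$ helper glues ensures that only grouted supertiles sharing type $\gamma$ bind to one another. Once these geometric and glue-type constraints are verified, the constructed assembly sequence terminates in a stage-$(n+1)$ ladder supertile of the targeted type, establishing the observation.
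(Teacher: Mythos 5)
Your proposal is correct and matches the paper's intended justification: the paper states this as an unproved observation whose supporting argument is distributed across Sections~\ref{sec:stage-n-appendix} and~\ref{sec:proper-assembly-appendix} exactly as you reconstruct it (grout $\alpha$, obtain the remaining grouted ladder and rung supertiles inductively from the hard-coded stage-$2$ base case, and bind them as in Figure~\ref{fig:sequence}, checking the race conditions there). The only refinements worth making are to state the induction hypothesis explicitly as ``for every stage $m\geq 2$ and every type, a producible stage-$m$ ladder supertile of that type exists'' (the observation itself only yields \emph{some} stage-$(m+1)$ ladder from a given stage-$m$ one), and to note that the target stage-$(n+1)$ type must be chosen compatibly with $\alpha$'s indicating glues rather than arbitrarily among all $11$ types.
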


\begin{observation}\label{obs:ufractal-poc3}
For each $n\geq 2$, there is a sequence of supertiles $\alpha_n$ where $\dom(\alpha_n) = U_n$ and an assembly sequence $\sigma$ from $\alpha_n$ to $\alpha_{n+1}$. Moreover, the result of $\sigma$ is $\bf{U}$.
\end{observation}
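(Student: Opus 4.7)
The statement combines two claims: for each $n \geq 2$ there is a producible supertile $\alpha_n$ with $\dom(\alpha_n) = U_n$ (up to translation) together with an assembly subsequence $\sigma_n$ from $\alpha_n$ to $\alpha_{n+1}$, and the concatenated infinite assembly sequence $\sigma = \sigma_2 \sigma_3 \sigma_4 \cdots$ has result $\bf{U}$. I would proceed by induction on $n$, using the construction of boundary-type ladder supertiles and boundary supertiles introduced in Section~\ref{sec:quadrant-appendix}.

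For the base case $n=2$, I would take $\alpha_2$ to be the boundary-type stage-$2$ ladder supertile, together with a left rung, a right rung, the grout supertiles that connect them, and the boundary supertiles that attach to the exposed western/southern indicating glues to fill in the westernmost and southernmost points of $U_2$. Section~\ref{sec:quadrant-appendix} essentially spells out these pieces; what remains is the routine check that the union of their domains is precisely $U_2$ (modulo translation), which follows by direct inspection of the generator of $\bf{U}$.

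For the inductive step, assume $\alpha_n$ is producible with $\dom(\alpha_n) = U_n$ and that $\alpha_n$ contains a boundary-type stage-$n$ ladder supertile as a sub-supertile. By Observation~\ref{obs:ufractal-poc2}, there is an assembly sequence that extends this stage-$n$ ladder supertile to a boundary-type stage-$(n+1)$ ladder supertile, by attaching further grouted stage-$n$ ladder supertiles and rung supertiles as described in Section~\ref{sec:stage-n-appendix}. Because each of the additional stage-$n$ ladder supertiles can itself be chosen to be ``boundary-completed'' in the same way as $\alpha_2$ (by the same base-case argument, applied inductively at each prior stage), I can append the attachment of the appropriate boundary supertiles along the new western/southern exposed glues to obtain a supertile $\alpha_{n+1}$ with $\dom(\alpha_{n+1}) = U_{n+1}$. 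Defining $\sigma_n$ to be this sequence of attachments from $\alpha_n$ to $\alpha_{n+1}$ gives the first part of the statement. Finally, since the domains are nested, $\dom(\alpha_n) \subseteq \dom(\alpha_{n+1})$, and $\bigcup_{n \geq 2} U_n = \bf{U}$ by definition of the discrete self-similar fractal, the result of the infinite concatenation $\sigma$ is an assembly with domain $\bf{U}$.

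The main obstacle will be verifying that the boundary-type identity propagates consistently between stages and that the boundary supertiles already attached to $\alpha_n$ do not block the grout attachments needed to extend into $\alpha_{n+1}$. This is precisely what the tweak in Section~\ref{sec:quadrant-appendix} was designed to handle: the boundary-type indicating glues admit only boundary-type grout, so a boundary-type stage-$n$ ladder supertile lies in the correct position inside a boundary-type stage-$(n+1)$ ladder supertile, and the boundary supertiles attach only on the exterior of $\bf{U}$'s western and southern frontier, where no further ladder-forming grout needs to grow. Together with the avoidance of spurious bindings established in Section~\ref{sec:proper-assembly-appendix}, this gives the compatibility required for $\sigma$ to be a valid assembly sequence all the way up.
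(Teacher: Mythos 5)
Your argument matches the paper's intended justification: the paper's entire proof of this observation is the single remark that the $\alpha_n$ are the boundary-type stage-$n$ ladder supertiles of Section~\ref{sec:quadrant-appendix}, which is exactly the sequence you construct (boundary-type ladders with their boundary supertiles attached) via your induction on $n$. Your write-up is in fact more detailed than the paper's one-sentence justification, and the limiting step via $\bigcup_{n\geq 2} U_n = \mathbf{U}$ is the standard argument the paper leaves implicit.
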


The sequence of supertiles in Observation~\ref{obs:ufractal-poc3} is a sequence of stage-$n$ ladder supertiles with boundary type. Observations~\ref{obs:ufractal-poc}, \ref{obs:ufractal-poc2}, and \ref{obs:ufractal-poc3} show that $(T_{\bf{U}},2)$ finitely self-assembles $\bf{U}$. We also note that by construction, the only terminal assemblies of $\calT_{\bf{U}}$ are the result of an assembly sequence that contains an infinite subsequence of stage-$i$ ladder supertiles for $i$ such that $2\leq i \leq \infty$.
Hence, for every stage $s\geq1$ and every terminal assembly $\alpha \in \termasm{T_{\bf{U}}}$, $\fractal{U}{s} \subset \dom(\alpha)$ (modulo translation).

\ifabstract
\ifarxiv
\newpage
\appendix

\begin{center}
	\Huge\bfseries
	Technical Appendix
\end{center}

\magicappendix \label{sec:appendix}
\fi
\fi

\end{document}
\grid